\newtheorem{theorem}{Theorem}%  meant for continuous numbers
\renewcommand{\footnoterule}{%
	
	\hrule width 0.3\textwidth height 0.4pt
	\kern 6pt  % 增加分割线与脚注文本之间的垂直间距
}
\newtheorem{lemma}{Lemma}%
\newtheorem{definition}{Definition}%
\begin{document}
	
	\title[Article Title]{Linearly Homomorphic Ring Signature Scheme over Lattices}

	\author[1,2]{\fnm{Heng} \sur{Guo}}\email{guoheng@ruc.edu.cn}
	\equalcont{These authors contributed equally to this work.}

	\author[1]{\fnm{Jia} \sur{Li}}\email{lijia1230@ruc.edu.cn}
	\equalcont{These authors contributed equally to this work.}

     \author[3]{\fnm{Yanan} \sur{Wang}}\email{math\_wyn@buaa.edu.cn}
	\equalcont{These authors contributed equally to this work.}

     \author[4]{\fnm{Fengxia} \sur{Liu}}\email{shunliliu@gbu.edu.cn}
	\equalcont{These authors contributed equally to this work.}

 \author[4]{\fnm{Zhiyong} \sur{Zheng}}\email{zhengzy@ruc.edu.cn}
	\equalcont{These authors contributed equally to this work.}

	\author*[1]{\fnm{Kun} \sur{Tian}}\email{tkun19891208@ruc.edu.cn}
	\equalcont{These authors contributed equally to this work.}

	\affil[1]{\orgdiv{School of Mathematics}, \orgname{Renmin University of China}, \orgaddress{ \city{Beijing},  \country{China}}}
	
	\affil[2]{\orgdiv{Institute of Interdisciplinary Studies}, \orgname{Renmin University of China}, \orgaddress{ \city{Beijing},  \country{China}}}
	
     \affil[3]{\orgdiv{School of Mathematical Sciences}, \orgname{Beihang University}, \orgaddress{ \city{Beijing}, \country{China}}}

	\affil[4]{\orgname{Great Bay University}, \orgaddress{ \city{Dongguan}, \state{Guangdong Province}, \country{China}}}

	%%==================================%%
	%% Sample for unstructured abstract %%
	%%==================================%%
	
	\abstract{Construct the first provably secure linear homomorphic ring signature scheme. Ring signatures allow a signer to anonymously sign a message on behalf of a user group (ring) and are widely applied in areas such as identity protection, electronic voting, and privacy enhancement in blockchain. Homomorphic signatures, on the other hand, support verifiable computations on signed data. The integration of anonymity and computability in homomorphic ring signatures holds the potential to create new application scenarios for privacy-preserving distributed systems. It is worth noting that Choi and Kim first introduced the concept of linear homomorphic ring signatures in 2017 and proposed a specific scheme. However, their scheme lacks a complete security proof, leaving its security theoretically unconfirmed. To address this research gap, this paper presents the first provably secure lattice-based linear homomorphic ring signature scheme, designed for scenarios where the ring size is $\mathcal{O}(\log n)$. This scheme not only combines the anonymity of ring signatures with the malleability of homomorphic signatures but also achieves resistance against quantum attacks.
}

	\keywords{ Lattice, homomorphic ring signature, anonymity,   unforgeability, Short Integer Solution (SIS)}
	
	%%\pacs[JEL Classification]{D8, H51}
	
	%%\pacs[MSC Classification]{35A01, 65L10, 65L12, 65L20, 65L70}
	
	\maketitle
	
	Functioning as a fundamental pillar of public-key cryptography, the digital signature mechanism was formally presented by Diffie and Hellman in their landmark 1976 paper, New Directions in Cryptography \cite{1}. This seminal work catalyzed extensive research into digital signatures. Driven by objectives to enhance efficiency and fortify security, subsequent research has engendered a sophisticated body of theory and an array of practical techniques. Representative works can be found in references \cite{2,3,4,5,6}. 

In 2001, Rivest, Shamir, and Tauman \cite{7,8} first proposed the ring signature scheme. The core concept of this cryptographic primitive is as follows: the signer can select a group of users through their public keys and then sign a message on behalf of this group (referred to as a "ring"). Signing in the name of these users implies that it is impossible to determine which specific user within the ring actually signed the message. Ring signatures exhibit notable flexibility characteristics: the signer can freely and spontaneously form the ring structure without the involvement of a trusted third party; in fact, ring members do not even need to be aware of each other's existence. Ring signature technology has garnered sustained attention in academia due to its potential application value in addressing real-world technical and social challenges. For instance, it protects whistleblower identities \cite{7}, enables participant anonymity in electronic voting systems and digital cash protocols \cite{9}, facilitates transaction anonymization in blockchain applications \cite{10}, and particularly in the Monero cryptocurrency system, effectively achieves transaction untraceability \cite{11}.  Representative ring signature schemes can be referred to in \cite{12,13,14,15,16}.

Homomorphic signatures were initially formalized by Johnson et al. \cite{17} in 2002 as a cryptographic primitive that permits authorized computations on authenticated data while maintaining verifiability without access to the original signing key. This functionality has demonstrated significant utility across various applications, most notably in network coding protocols \cite{18,19,20} and cloud-based outsourced computation systems \cite{21,22,23}. Subsequent research has systematically organized homomorphic signature schemes into three principal classes: linearly homomorphic schemes supporting additive operations \cite{24,25,26,27,28}, polynomially homomorphic schemes enabling polynomial function evaluation \cite{29,30,31,32}, and fully homomorphic schemes allowing arbitrary computations \cite{33,34,35}.

 Although ring signatures and homomorphic signatures each provide robust cryptographic functionalities, the fusion of these two properties naturally gives rise to a novel cryptographic primitive: homomorphic ring signatures. This primitive can simultaneously offer the anonymity guarantees of ring signatures and the computational malleability of homomorphic signatures, enabling computations on authenticated data within dynamically formed rings while preserving signer privacy. Such schemes hold the potential to unlock new applications in privacy-preserving distributed systems, such as confidential blockchain transactions requiring signed input aggregation, and secure multi-party computation scenarios where participants demand both anonymity and verifiable correctness of results. It is worth noting that Choi and Kim, in their 2017 publication, first introduced the concept of linear homomorphic ring signatures and proposed a concrete scheme \cite{41, 42}. However, their scheme did not provide a complete security proof, leaving its security theoretically unconfirmed. 

	\textbf{Our contributions.} 
	To fill the aforementioned research gap, this paper proposes the first provably secure lattice-based linear homomorphic ring signature scheme, suitable for scenarios with a ring size of $\mathcal{O}(\log n)$. This scheme not only preserves the essential properties of ring signatures and homomorphic signatures—namely, anonymity and computational practicality—but also achieves the additional security advantage of resistance against quantum attacks.
	
	\textbf{Organization.}
The structure of this paper is as follows: Section 2 introduces the basic notation definitions and theoretical foundations. Section 3 presents the definition of the signature scheme and its security model. Section 4 elaborates on the construction of the scheme and provides a formal correctness proof. Section 5 demonstrates that the proposed scheme satisfies unconditional anonymity and existential unforgeability. Finally, Section 6 concludes the paper and discusses open problems.

\section{Preliminaries}

\subsection{\textbf{Notation}}

\noindent \textbf{Mathematical Notation:} Vectors are denoted by bold lowercase letters (e.g., $\boldsymbol{u}$, $\boldsymbol{v}$), and matrices are denoted by uppercase letters (e.g., $\mathbf{A}$, $\mathbf{T}$). The horizontal concatenation of matrices is denoted as $[\mathbf{A} \mid \mathbf{T}]$ or $[\mathbf{A}\parallel \mathbf{T}]$. The ring of integers is denoted by $\mathbb{Z}$, the ring of integers modulo $q$ by $\mathbb{Z}_q$ (or $\mathbb{Z}/q\mathbb{Z}$), a general field by $\mathbb{F}$, and the finite field with $q$ elements (where $q$ is a prime or a prime power) by $\mathbb{F}_q$. The $n$-dimensional Euclidean space is denoted by $\mathbb{R}^n$. 

	\vspace{1\baselineskip}
\noindent \textbf{Matrix Norms and Orthogonalization:} For a matrix $\mathbf{A}=[\mathbf{a}_{1},\dots,\mathbf{a}_{m}]\in \mathbb{R}^{n\times m}$, the matrix norm is defined as $\|\mathbf{A}\|=\max_{1\leq i\leq m}\|\mathbf{a}_{i}\|$, where $\|\mathbf{a}_{i}\|$ represents the $\ell_2$-norm of the vector $\mathbf{a}_{i}$. Furthermore, let $\widetilde{\mathbf{A}}=[\widetilde{\mathbf{a}}_{1},\dots,\widetilde{\mathbf{a}}_{n}]$ denote the Gram-Schmidt orthogonalization of $\mathbf{A}$, i.e.,
\[
\widetilde{\mathbf{a}}_{1}= \mathbf{a}_{1}, \quad \widetilde{\mathbf{a}}_{i}= \mathbf{a}_{i}-\sum_{j=1}^{i-1}\frac{\langle\mathbf{a}_{i}, \widetilde{\mathbf{a}}_{j}\rangle}{\langle \widetilde{\mathbf{a}}_{j}, \widetilde{\mathbf{a}}_{j}\rangle} \widetilde{\mathbf{a}}_{j}, \quad 2\leq i\leq n,
\]
where $\langle\cdot,\cdot\rangle$ denotes the standard inner product in Euclidean space.

	\vspace{1\baselineskip}
\noindent \textbf{Asymptotic Notation:}
	For asymptotic analysis, we employ the following notation:
		\begin{itemize}
			\item $f(n) = \mathcal{O}(g(n))$ if $\exists c>0, N\in\mathbb{N}$ such that $\forall n>N$, $f(n) \leq cg(n)$
			\item $f(n) = \widetilde{\mathcal{O}}(g(n))$ when $f(n) = \mathcal{O}(g(n)\log^{c'}n)$ for some $c'>0$
			\item $f(n) = \omega(g(n))$ if $\forall c>0$, $\exists N$ such that $g(n) \leq cf(n)$ whenever $n > N$
		\end{itemize}
		
		Let $n$ be the security parameter. A function $f(n)$ is said to be \textit{polynomial} (written $f(n) = \mathrm{poly}(n)$) if there exists $c>0$ such that $f(n) = \mathcal{O}(n^c)$; conversely, it is \textit{negligible} (denoted $\mathrm{negl}(n)$) if for every $c>0$ we have $f(n) = \mathcal{O}(n^{-c})$. An event is said to occur with \textit{overwhelming probability} when its probability satisfies $\Pr[\text{event}] \geq 1 - \mathrm{negl}(n)$.

	\vspace{1\baselineskip}
\noindent \textbf{Probabilistic Notation and Algorithms:}
For any given distribution $\mathcal{D}$, the expression $x \sim \mathcal{D}$ signifies that the variable $x$ follows the distribution $\mathcal{D}$, while $x \leftarrow \mathcal{D}$ denotes the process of sampling a random value from $\mathcal{D}$. When considering a finite set $\mathcal{X}$, the notation $x \stackrel{\$}{\leftarrow} \mathcal{X}$ indicates that $x$ is chosen uniformly at random from $\mathcal{X}$. Regarding probabilistic polynomial-time (PPT) algorithms $\mathsf{Alg}$, the notation $y \leftarrow \mathsf{Alg}(x)$ represents the procedure of generating an output $y$ by executing the algorithm on input $x$.

	\vspace{1\baselineskip}
\subsection{\textbf{Statistical Distance, Entropy and Decomposition Algorithm}}	

\begin{definition}\label{d3.45}(Statistical distance, \cite{24})
Let \( M \subset \mathbb{R}^{n} \) be a finite or countable set, and let \( X \) and \( Y \) be discrete random variables taking values in \( M \). The statistical distance \( X \) and \( Y \) is defined as:
\[
\Delta(X, Y) = \frac{1}{2} \sum_{a \in M} \left| \mathbb{P}\{X = a\} - \mathbb{P}\{Y = a\} \right|.
\]
\end{definition}

Two probability distributions $\mathcal{D}_0$ and $\mathcal{D}_1$ are said to be \emph{statistically indistinguishable} if their statistical distance 
\[
\Delta(\mathcal{D}_0, \mathcal{D}_1)
\] 
is a negligible function in the security parameter $n$.

	\vspace{1\baselineskip}
\begin{definition}\label{d3.2}( \emph{Min-entropy}\cite{33})
	For a random variable \( X \), its min-entropy is defined as:
	\[
	H_{\infty}(X) = -\log\left( \max_{x \in X} \Pr\{X = x\} \right).
	\]
	
	The average conditional min-entropy of a random variable \( X \) conditional on a correlated variable \( Y \) is defined as:
	\[
	H_{\infty}(X|Y) = -\log\left( \mathbb{E}_{y \in Y} \left\{ \max_{x \in X} \Pr\{X = x | Y = y\} \right\} \right).
	\]
\end{definition}

The optimal probability of an unbounded adversary guessing $X$ given the correlated value $Y$ is $2^{-H_{\infty}(X|Y)}$\cite{33}.

\vspace{1\baselineskip}
\begin{lemma}\label{l3.1}(\cite{33})
	Let $X$, $Y$ be arbitrarily random variables where the support of $Y$ lies in  $\mathcal{Y}$, Then
	$$ H_{\infty}(X|Y)> H_{\infty}(X)-\log(|\mathcal{Y}|). $$
\end{lemma}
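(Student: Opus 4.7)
The plan is to work directly from the definitions of min-entropy and average conditional min-entropy, reducing the inequality to a pointwise bound on joint probabilities followed by summation over the support of $Y$. The engine is the identity $\Pr[Y=y]\cdot \Pr[X=x\mid Y=y] = \Pr[X=x, Y=y]$, which converts a weighted expectation over conditionals into a plain sum of joint probabilities that can be majorized by the single-variable distribution of $X$.

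Concretely, first I would unfold the definition
\[
H_{\infty}(X\mid Y) = -\log\Bigl(\sum_{y\in\mathcal{Y}} \Pr[Y=y]\,\max_{x}\Pr[X=x\mid Y=y]\Bigr),
\]
so that the task reduces to upper-bounding the argument of the logarithm by $|\mathcal{Y}|\cdot 2^{-H_{\infty}(X)}$. For each fixed $y$ with $\Pr[Y=y]>0$ I would pull the factor $\Pr[Y=y]$ inside the maximum to get $\max_{x}\Pr[X=x, Y=y]$, and then observe the trivial but key inequality
\[
\max_{x}\Pr[X=x, Y=y] \;\le\; \max_{x}\Pr[X=x] \;=\; 2^{-H_{\infty}(X)}.
\]

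Summing this bound over the (at most) $|\mathcal{Y}|$ values of $y$ gives
\[
\sum_{y\in\mathcal{Y}} \max_{x}\Pr[X=x, Y=y] \;\le\; |\mathcal{Y}|\cdot 2^{-H_{\infty}(X)},
\]
and taking $-\log$ of both sides immediately yields $H_{\infty}(X\mid Y)\ge H_{\infty}(X)-\log(|\mathcal{Y}|)$, which is the claim (with the understanding that the strict ``$>$'' in the statement is the usual ``$\ge$'' up to notational convention).

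There is no real obstacle here beyond making sure the swap between $\Pr[Y=y]\cdot\max_x$ and $\max_x\Pr[X=x,Y=y]$ is handled cleanly; the only subtlety is the boundary case $\Pr[Y=y]=0$, which contributes nothing to the expectation and can be safely excluded. Once that bookkeeping is done, the inequality is a one-line consequence of the union-type bound over $\mathcal{Y}$ and the monotonicity of $-\log$.
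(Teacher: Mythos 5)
Your proof is correct and is the standard argument (it is exactly the proof of this lemma in the cited literature); the paper itself states the lemma without proof, simply citing \cite{33}. The only point worth flagging is the one you already noted: the argument yields $H_{\infty}(X\mid Y)\ge H_{\infty}(X)-\log(|\mathcal{Y}|)$, and the strict inequality ``$>$'' in the paper's statement should be read as ``$\ge$''.
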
		

The following decomposition algorithm is derived from \cite{24}. This algorithm decomposes a vector $\boldsymbol{w} \in \mathbb{F}_{2}^{2k}$ into a pair of vectors $(\boldsymbol{u}, \boldsymbol{v})$ belonging to a specific set, as described below:

\begin{theorem}\label{de}( \emph{Decomposition Algorithm}\cite{24})
Let \( k \) be an odd integer with \( k = \mathrm{poly}(n) \). Define the sets
\[
\begin{aligned}
	\mathcal{X} &= \big\{ x \in \mathbb{F}_{2}^{2k} \mid \|x\| = k-1 \big\} \sqcup \{0\}, \\
	\mathcal{Y} &= \big\{ y \in \mathbb{F}_{2}^{2k} \mid \|y\| = k \big\} \sqcup \{0\}, \\
	\mathcal{Z} &= \big\{ z \in \mathbb{F}_{2}^{2k} \mid \|z\| = k+1 \big\} \sqcup \{0\}.
\end{aligned}
\]
There exists a polynomial-time algorithm \(\textsf{Decompose}(\boldsymbol{w})\) that, given an input \(\mathbf{w} \in \mathbb{F}_{2}^{2k}\), outputs a pair of vectors \((\mathbf{u}, \mathbf{v})\) such that:
\begin{itemize}
	\item \(\boldsymbol{u} \in \boldsymbol{X}\) and \(\boldsymbol{v} \in \mathcal{Y}\) or  \(\mathcal{Z}\);
	\item if \(\boldsymbol{w} \neq 0\), then at most one of \(\boldsymbol{u}\) and \(\boldsymbol{v}\) is zero.
\end{itemize}
	\end{theorem}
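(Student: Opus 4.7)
The plan is to interpret the theorem in terms of Hamming weights and then give an explicit combinatorial construction. Every coordinate of a vector in $\mathbb{F}_2^{2k}$ is $0$ or $1$, so the $\ell_2$-norm $\|x\|$ equals $\sqrt{\mathrm{wt}(x)}$ where $\mathrm{wt}$ denotes Hamming weight, and hence $\mathcal{X}, \mathcal{Y}, \mathcal{Z}$ are exactly the zero vector together with the binary vectors of weight $k-1$, $k$, $k+1$, respectively. A decomposition $\mathbf{w} = \mathbf{u} + \mathbf{v}$ over $\mathbb{F}_2^{2k}$ corresponds, on the support side, to $S = U \triangle V$, where $S := \mathrm{supp}(\mathbf{w})$ and similarly for $U, V$; the overlap $O := U \cap V$ is forced into $\overline{S}$. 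This reduces the task to choosing disjoint sets $A, B \subseteq S$ with $A \cup B = S$ and an overlap $O \subseteq \overline{S}$ such that $|A| + |O|$ and $|B| + |O|$ hit two of the allowed weights $\{0, k-1\}$ and $\{0, k, k+1\}$.

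I would then proceed by case analysis on $h := |S|$. If $h = 0$, output $(\mathbf{0}, \mathbf{0})$. If $h$ is odd, target $|U| = k-1$ and $|V| = k$; the linear system $|A| + |O| = k-1$, $|B| + |O| = k$, $|A| + |B| = h$ has the unique solution $|A| = (h-1)/2$, $|B| = (h+1)/2$, $|O| = (2k-1-h)/2$, and these are nonnegative integers precisely because $h$ is odd with $1 \le h \le 2k-1$. If $h$ is even and positive, target $|U| = k-1$ and $|V| = k+1$ instead, obtaining $|A| = h/2 - 1$, $|B| = h/2 + 1$, $|O| = k - h/2$, each nonnegative for $2 \le h \le 2k$. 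A short computation confirms $|O| \le 2k - h = |\overline{S}|$ in both cases, so one picks the subsets greedily (for instance in lexicographic order) and outputs $\mathbf{u} := \mathbf{1}_{A \cup O}$ and $\mathbf{v} := \mathbf{1}_{B \cup O}$.

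What remains is to verify the \emph{at most one is zero} clause and the running time. Whenever $\mathbf{w} \neq \mathbf{0}$ we have $|V| \ge k \ge 1$, so $\mathbf{v} \neq \mathbf{0}$; moreover $\mathbf{u} = \mathbf{0}$ would force $k - 1 = 0$ and $|O| = 0$, and one checks separately that in the degenerate $k = 1$ case the algorithm still returns a nonzero $\mathbf{v}$. The algorithm performs $O(k) = \mathrm{poly}(n)$ bit operations. The main obstacle I anticipate is purely the bookkeeping in the parity case split: one must verify that the prescribed cardinalities $|A|, |B|, |O|$ are simultaneously nonnegative integers, sum correctly, and fit inside $S$ or $\overline{S}$ at every boundary value $h \in \{1, 2, 2k-1, 2k\}$, and that the two target weight profiles $(k-1, k)$ and $(k-1, k+1)$ interleave cleanly with the two parities of $h$ so that no value of $h$ is left uncovered.
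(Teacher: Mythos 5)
Your construction is correct, and it is worth noting that the paper itself offers no proof of this statement --- it is imported verbatim from reference [24] --- so your argument is filling a gap rather than competing with an in-paper proof; what you give is essentially the standard support-counting decomposition underlying that cited result. The case analysis on the parity of $h=|\mathrm{supp}(\mathbf{w})|$, the linear system for $|A|,|B|,|O|$, the boundary checks $|A|\geq 0$ and $|O|\leq 2k-h$, and the observation that $\mathbf{v}$ always has weight at least $k$ (so the ``at most one is zero'' clause is automatic) are all sound. The one thing to repair is the opening translation: as written, ``$\|x\|=\sqrt{\mathrm{wt}(x)}$'' is inconsistent with your immediately following claim that $\mathcal{X},\mathcal{Y},\mathcal{Z}$ consist of vectors of weight $k-1,k,k+1$ (the $\ell_2$ reading would give weights $(k-1)^2$, etc., which cannot even fit in $\mathbb{F}_2^{2k}$). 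You should state explicitly that $\|\cdot\|$ here must be read as the Hamming weight (equivalently the $\ell_1$ norm, or the \emph{squared} $\ell_2$ norm, of a $0$--$1$ vector); the rest of your proof already uses this reading consistently. Two further cosmetic points: your argument never uses the hypothesis that $k$ is odd (it works for all $k\geq 1$, so the theorem's hypothesis is simply not needed for the decomposition itself), and the degenerate case $k=1$, where $\mathcal{X}=\{\mathbf{0}\}$ forces $\mathbf{u}=\mathbf{0}$, is handled correctly since the clause only forbids \emph{both} outputs from vanishing.
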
	

\subsection{ \textbf{Background on Lattices and Hard Problems}}

\begin{definition}\label{d3.3}(\emph{Lattice }\cite{36})
	Let $\Lambda\subset \mathbb{R}^{n}$ be a non-empty subset. $\Lambda$ is called a lattice if:
	
	(1) it is an additive subgroup of $\mathbb{R}^{n}$;
	
	(2) there exists a positive constant $\lambda=\lambda(\Lambda)>0$ such that
	$$\mathrm{ min}\{\parallel \boldsymbol{x}\parallel | \boldsymbol{x}\in\Lambda , \boldsymbol{x}\neq 0 \}=\lambda.$$  $\lambda$ is called the minimum distance.
\end{definition}

A full-rank lattice $\Lambda \subseteq \mathbb{R}^n$ can be equivalently characterized by a basis $\mathbf{B}= \{\boldsymbol{b}_1, \dots, \boldsymbol{b}_n\}$, where each $\boldsymbol{b}_i \in \mathbb{R}^n$ is linearly independent. The lattice generated by $\mathbf{B}$ comprises all integer linear combinations of the basis vectors:

\begin{equation*}
	\Lambda(\mathbf{B}) = \left\{ \sum_{i=1}^n c_i\boldsymbol{b}_i \;\Big|\; \boldsymbol{c} = (c_1, \dots, c_n)^\top \in \mathbb{Z}^n \right\}.
\end{equation*}

The dual lattice $\Lambda^*$, which is intrinsically associated with $\Lambda$, is defined as the set of vectors satisfying certain orthogonality conditions with respect to $\Lambda$:

\begin{equation*}
	\Lambda^* = \left\{ \boldsymbol{v} \in \mathbb{R}^n \;\Big|\; \langle \boldsymbol{v}, \boldsymbol{u} \rangle \in \mathbb{Z} \text{ for all } \boldsymbol{u} \in \Lambda \right\}.
\end{equation*}

	\vspace{1\baselineskip}
\begin{definition}\label{d3.4}(\emph{$q$-ary lattices}\cite{36})
	Let $\mathbf{A}\in \mathbb{Z}_{q}^{n\times m}$, $\boldsymbol{u}\in \mathbb{Z}^{n}$. The following two $q$-ary lattices are defined as:
	
	(1)$\Lambda_{q}^{\bot}=\{ \boldsymbol{x}\in \mathbb{Z}^{m} | \mathbf{A}\cdot \boldsymbol{x} \equiv 0 (\mathrm{mod} q)   \}$;
	
	(2)$\Lambda_{q}^{\mathbf{u}}=\{ \boldsymbol{y}\in \mathbb{Z}^{m} | \mathbf{A}\cdot \mathbf{y} \equiv\boldsymbol{u} (\mathrm{mod} q)   \}$.
	
	The set $\Lambda_{q}^{\mathbf{u}}$ is a coset of  $\Lambda_{q}^{\bot}$ since $\Lambda_{q}^{\mathbf{u}} =\Lambda_{q}^{\bot}+\boldsymbol{x}$ for any $\boldsymbol{x}$ such that $\mathbf{A}\cdot\boldsymbol{x}\equiv \boldsymbol{u} (\mathrm{mod}q)$.
\end{definition}

	\vspace{1\baselineskip}
\begin{definition}\label{d3.5}(\emph{Short integer solution}\cite{36})
	Let $n$, $m$, $q$ be positive integers, with $m=\mathrm{poly(}n)$. Let $\mathbf{A}\in \mathbb{Z}_{q}^{n\times m}$ be a uniformly distributed random matrix over $\mathbb{Z}_{q}$, and let $\beta\in \mathbb{R}$ such that $0<\beta<q$. The $\mathrm{SIS}$ problem is to find a short integer solution $\boldsymbol{x}$ satisfying the following condition:
	$$ \mathbf{A}\cdot\boldsymbol{x}\equiv 0(\mathrm{mod}q), \quad \mathrm{and}\quad \boldsymbol{x}\neq0, \parallel \boldsymbol{x}\parallel\leq\beta.$$
	We write the above $\mathrm{SIS}$ problem as $\mathrm{SIS}_{q,n,m,\beta}$ or $\mathrm{SIS}_{q,\beta}$.
\end{definition}	

	\vspace{1\baselineskip}
\begin{theorem}\label{t3.1}(\emph{Worst-case to average-case reduction }\cite{37})
	For any polynomial bounded \(m = \text{poly}(n)\), and any \(\beta > 0\), if \(q \geq \beta \cdot \omega(\sqrt{n\log n})\), then solving the average-case problem  $\mathrm{SIS}_{q,\beta}$  is at least as hard as solving the worst-case problem \(\mathrm{SIVP}{\gamma}\) on any \(n\)-dimensional lattice for \(\gamma=\beta \cdot \tilde{O}(\sqrt{n})\).
\end{theorem}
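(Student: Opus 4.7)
The plan is to give a reduction: assume access to an efficient average-case solver $\mathcal{A}$ for $\mathrm{SIS}_{q,n,m,\beta}$ and use it to construct an efficient algorithm that, on any input basis $B$ of an $n$-dimensional lattice $\Lambda$, outputs $n$ linearly independent lattice vectors of length at most $\gamma\cdot\lambda_n(\Lambda)$ with $\gamma=\beta\cdot\widetilde{\mathcal{O}}(\sqrt{n})$. The vehicle for the reduction is discrete Gaussian sampling on the dual lattice $\Lambda^{*}$, which allows us to manufacture an average-case SIS instance whose solutions encode short vectors of $\Lambda$.

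More concretely, I would proceed in the following steps. First, using an iterative Gaussian-sampling subroutine (in the spirit of Micciancio--Regev and Gentry--Peikert), produce $m=\mathrm{poly}(n)$ independent samples $\mathbf{y}_1,\dots,\mathbf{y}_m$ from the discrete Gaussian $D_{\Lambda^{*},s}$ at a parameter $s$ slightly above the smoothing parameter $\eta_{\varepsilon}(\Lambda^{*})$. Second, form the matrix $A\in\mathbb{Z}_q^{n\times m}$ whose columns are the coordinates of the $\mathbf{y}_i$ modulo $q$ with respect to a fixed basis of $\Lambda^{*}/q\Lambda^{*}$; the standard regularity/smoothing lemma shows that, provided $s\geq q\cdot\eta_{\varepsilon}(\Lambda)/\omega(\sqrt{\log n})$ and the given $q$-bound holds, the resulting $A$ is within negligible statistical distance of the uniform distribution on $\mathbb{Z}_q^{n\times m}$. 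Third, feed $A$ to $\mathcal{A}$ to obtain a nonzero $\mathbf{z}\in\mathbb{Z}^{m}$ with $A\mathbf{z}\equiv 0\pmod q$ and $\|\mathbf{z}\|\leq\beta$. Fourth, form $\mathbf{v}=\sum_i z_i\mathbf{y}_i\in\Lambda^{*}$; the congruence $A\mathbf{z}\equiv 0\pmod q$ guarantees $\mathbf{v}\in q\Lambda^{*}$, so $\mathbf{v}/q$ lies in $\Lambda^{*}$, and a Gaussian tail bound yields $\|\mathbf{v}/q\|\leq\beta s\sqrt{m}/q=\beta\cdot\widetilde{\mathcal{O}}(\sqrt{n})\cdot\eta_{\varepsilon}(\Lambda^{*})$. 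Fifth, transfer this short dual vector into a short primal vector either via the transference theorems or, as in the standard treatment, by running the entire reduction on $\Lambda$ directly (rather than on its dual) so that the output is already a short element of $\Lambda$; iterating the procedure and discarding vectors in the span of previous outputs produces the required $n$ linearly independent short vectors.

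The main obstacle is the Gaussian-sampling step and the accompanying smoothing-parameter bookkeeping. One must establish (i) that one can in fact sample from $D_{\Lambda,s}$ at a parameter close to $\eta_{\varepsilon}(\Lambda)$ given only a worst-case basis — this is nontrivial and requires the iterative ``bootstrapping'' technique that starts from a large Gaussian parameter guaranteed by LLL and progressively shrinks it using the SIS oracle itself — and (ii) that $s$ can be pushed down to a value for which $\beta s\sqrt{m}/q\leq\gamma\cdot\lambda_n(\Lambda)$, which is precisely where the constraint $q\geq\beta\cdot\omega(\sqrt{n\log n})$ enters. Handling the negligible statistical-distance loss at each step, and combining it with the $\ell_{2}$ tail bounds for discrete Gaussians to control $\|\mathbf{v}\|$, is the most delicate part of the argument; everything else is bookkeeping around the SIS oracle call.
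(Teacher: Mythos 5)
This theorem is not proved in the paper at all: it is imported verbatim as a preliminary from the cited literature (Micciancio--Regev and Gentry--Peikert--Vaikuntanathan), so there is no in-paper proof to compare against. Your outline does follow the standard reduction at a high level (discrete Gaussian sampling above the smoothing parameter, the regularity lemma to manufacture a uniform SIS matrix, one oracle call, and recombination of the samples with the short solution), and you correctly identify the iterative bootstrapping of the Gaussian parameter as the crux. But as written the sketch has concrete problems beyond ``bookkeeping.''

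First, the detour through the dual lattice is not a harmless presentational choice. The standard reduction samples $\mathbf{y}_i \sim D_{\Lambda,s}$ from the \emph{primal} lattice, writes $\mathbf{y}_i = B\mathbf{c}_i$ and sets the columns of $A$ to $\mathbf{c}_i \bmod q$, so that a SIS solution $\mathbf{z}$ gives $\sum_i z_i\mathbf{y}_i \in q\Lambda$ and hence a short vector of $\Lambda$ directly. If you instead produce a short vector of $\Lambda^{*}$ and invoke transference, you lose a factor on the order of $n$ in the approximation ratio and, worse, a single short dual vector does not convert into $n$ linearly independent short primal vectors, which is what SIVP demands; your fallback clause (``run the reduction on $\Lambda$ directly'') is the only version that actually works. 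Second, the regularity condition is misstated: uniformity of $A$ requires $s \geq \eta_{\varepsilon}(q\Lambda) = q\cdot\eta_{\varepsilon}(\Lambda)$ (up to the usual $\omega(\sqrt{\log n})$ slack in the \emph{other} direction), not $s \geq q\cdot\eta_{\varepsilon}(\Lambda)/\omega(\sqrt{\log n})$. Third, the norm bound $\|\sum_i z_i\mathbf{y}_i\| \leq \|\mathbf{z}\|\cdot(\sum_i\|\mathbf{y}_i\|^2)^{1/2} \approx \beta s\sqrt{mn}$ carries an extra $\sqrt{m}$ that your displayed bound silently drops; eliminating it, and tying $s$ to the length of the current set of independent vectors so that the output is genuinely shorter, is exactly the content of the iterative argument you defer. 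Finally, you never argue that the combined vector is nonzero or that repeated runs yield $n$ linearly independent vectors --- this needs the conditional min-entropy of the Gaussian samples given the syndrome $A$. So the route is the right one, but the proposal in its current form does not close the reduction.
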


\subsection{\textbf{Gaussian Distribution and Its Related Algorithms}}	

\begin{definition}\label{d3.6}( \emph{Discrete Gaussian distributions}\cite{24})
	Let $s$ be a positive real number and $\boldsymbol{c}\in \mathbb{R}^{n}$ be a vector. The Gaussian function centered at $\boldsymbol{c}$ with parameter $s$ is defined as: $\rho_{s,\boldsymbol{c}}(\boldsymbol{x})=e^{\frac{-\pi}{s^{2}}\parallel\boldsymbol{x}-\boldsymbol{c}\parallel^{2}}.$ The discrete Gaussian measure $\mathcal{D}_{\Lambda,s, \boldsymbol{c}}$ defined on the lattice $\Lambda$ is given by:
	$$  \mathcal{D}_{\Lambda,s, \boldsymbol{c}}=\frac{\rho_{s,\boldsymbol{c}}(\mathbf{x})}{\rho_{s,\\boldsymbol{c}}(\Lambda)},$$
	where $\rho_{s,\boldsymbol{c}}(\Lambda)=\sum_{\boldsymbol{x}\in \Lambda}\rho_{s,\boldsymbol{c}}(\boldsymbol{x})$.
\end{definition}

	\vspace{1\baselineskip}
\begin{theorem}\label{t3.2}(\cite{38})
	Let $q\geq3$ be odd, $n$ be a positive integer,  and let $m:=\lceil 6n\log q\rceil.$  There is a probabilistic
	polynomial-time algorithm \textsf{TrapGen}$(q,n,m)$ that outputs a pair $(\mathbf{A}, \mathbf{T})$ such that $\mathbf{A}$ is statistically close to a uniform rank $n$ matrix in $\mathbb{Z}_{q}^{n\times m}$, and $\mathbf{T}\in\mathbb{Z}^{n\times n}$ is a basis for $\Lambda_{q}^{\bot}(A)$ satisfying
	$$ \parallel\widetilde{\mathbf{T}}\parallel\leq \mathcal{O}(\sqrt{n\log q}) \quad\text{and}\quad \parallel \mathbf{T}\parallel \leq \mathcal{O}(n\log q).$$
\end{theorem}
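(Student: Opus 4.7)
The plan is to implement the Ajtai-style construction refined by Alwen--Peikert~\cite{39}. At a high level one builds $A$ by concatenating a uniform block with a second block engineered so that short vectors of $\Lambda_q^{\bot}(A)$ can be read off directly, and then converts this generating set into an honest basis with small Gram--Schmidt norm.

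Concretely, I would write $m = m_1 + m_2$ with $m_1 = \lceil 3n\log q\rceil$, sample $A_1 \leftarrow \mathbb{Z}_q^{n\times m_1}$ uniformly, and choose a short integer matrix $R \in \mathbb{Z}^{m_1\times m_2}$ with i.i.d.\ entries in $\{-1,0,+1\}$. Setting $A_2 := -A_1 R \bmod q$ and $A := [A_1 \mid A_2]$, the block matrix whose top $m_1$ rows form $R$ and whose bottom $m_2$ rows form $I_{m_2}$ has each column in $\Lambda_q^{\bot}(A)$, since $A_1 R + A_2 I_{m_2} \equiv 0 \pmod q$. Together with the trivial short vectors $q\cdot \mathbf{e}_i$ for $i = 1, \dots, m$, these yield a full-rank generating set of $\Lambda_q^{\bot}(A)$ whose columns are all short. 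A deterministic Hermite-normal-form-style reorganisation, which is the technical core of~\cite{39}, then turns this generating set into an actual basis $T$ with the claimed norm bounds $\|T\| \leq \mathcal{O}(n\log q)$ and $\|\widetilde{T}\| \leq \mathcal{O}(\sqrt{n\log q})$.

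Two items still need verification: (a) $A$ is statistically close to uniform and has full row rank over $\mathbb{Z}_q$, and (b) the Gram--Schmidt norm of $T$ really is of order $\sqrt{n\log q}$. For (a), $A_1$ is exactly uniform by construction, while the conditional distribution of $A_2 = -A_1 R \bmod q$ given $A_1$ is statistically close to uniform by the leftover hash lemma applied to the $\mathbb{Z}_q$-linear hash family indexed by $A_1$; the choice $m_1 \geq 3n\log q$ forces the total statistical distance down to $2^{-\Omega(n)}$. Full row rank then follows from a standard counting argument for uniform matrices over $\mathbb{Z}_q$.

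The main obstacle will be part (b): controlling $\|\widetilde{T}\|$ rather than merely the column norms $\|T\|$, because the preimage-sampling applications that motivate this theorem require a bound on the Gram--Schmidt norm. I would follow~\cite{39} and order the short generators carefully so that successive Gram--Schmidt steps subtract only components of comparable or smaller size, preventing orthogonalisation from blowing the bound up. The combinatorial structure of the $(R,\,I_{m_2},\,qI_m)$ generating set is exactly what makes this possible, and extracting the sharp $\mathcal{O}(\sqrt{n\log q})$ bound from it is the technical heart of the argument.
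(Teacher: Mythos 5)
This theorem is imported by the paper from \cite{39} (ultimately the Alwen--Peikert construction) without proof, so your proposal can only be judged against the known construction --- and as written it has a genuine gap at its central step. The columns of the block matrix with top part $R$ and bottom part $I_{m_2}$, together with the trivial vectors $q\mathbf{e}_1,\dots,q\mathbf{e}_m$, do \emph{not} generate $\Lambda_q^{\bot}(A)$; they generate a proper sublattice of index $q^{m_1-n}$. To see this, note that any integer combination of your generators whose last $m_2$ coordinates vanish must have all its coefficients on the $\binom{R}{I_{m_2}}$-columns divisible by $q$, so its first block lies in $q\mathbb{Z}^{m_1}$. But $\Lambda_q^{\bot}(A)$ contains every vector of the form $(\mathbf{x},\mathbf{0})$ with $A_1\mathbf{x}\equiv\mathbf{0}\pmod q$, i.e.\ all of $\Lambda_q^{\bot}(A_1)\times\{\mathbf{0}\}$, which strictly contains $q\mathbb{Z}^{m_1}\times\{\mathbf{0}\}$ whenever $m_1>n$ (and you take $m_1=\lceil 3n\log q\rceil\gg n$). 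Consequently the ``Hermite-normal-form reorganisation'' you invoke would produce a basis of the wrong lattice, and the question of bounding $\|\widetilde{T}\|$ never even arises.

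This missing ingredient is precisely the technical heart of Alwen--Peikert: $A_2$ is not simply $-A_1R$ but $-A_1(R+G)$ for a carefully structured matrix $G$, and the trapdoor is assembled from several blocks (in their notation $G$, $R$, $U$, $P$, $C$) chosen so that the resulting square matrix is \emph{provably} a basis of all of $\Lambda_q^{\bot}(A)$ --- in particular so that the directions corresponding to $\Lambda_q^{\bot}(A_1)\times\{\mathbf{0}\}$ are reached by short vectors --- while keeping $\|T\|=\mathcal{O}(n\log q)$ and $\|\widetilde{T}\|=\mathcal{O}(\sqrt{n\log q})$. Your leftover-hash-lemma argument for the near-uniformity of $A$ is fine, but the lattice-generation and Gram--Schmidt claims cannot be repaired by ``ordering the generators carefully''; they require the additional algebraic structure you have omitted. (Separately, note the statement in the paper contains a typo: a basis of $\Lambda_q^{\bot}(A)\subset\mathbb{Z}^m$ should be $T\in\mathbb{Z}^{m\times m}$, not $\mathbb{Z}^{n\times n}$; the paper later applies the algorithm as $\textsf{TrapGen}(q,h,n)$ with the roles of the dimensions swapped.)
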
		

	\vspace{1\baselineskip}
In \cite{37}, Gentry and colleagues presented a method for sampling from discrete Gaussian distributions with any short basis.

	\vspace{1\baselineskip}
\begin{lemma}\label{l3.3}(\emph{Sampling from discrete Gaussian }\cite{37})
	Let $q\geq2$ , $\mathbf{A}\in \mathbb{Z}_{q}^{n\times m}$ with $m>n$ and let $\mathbf{T}$ be a basis for $\Lambda_{q}^{\bot}(\mathbf{A})$ and $s\geq \widetilde{\mathbf{T}}\cdot\omega(\sqrt{\log m })$. Then for $\boldsymbol{c}\in \mathbb{R}^{n}$ and $\boldsymbol{u}\in \mathbb{Z}_{q}^{n}$:
	\begin{enumerate}
		\item  There is a  probabilistic polynomial-time algorithm \textsf{SampleGaussian}$(\mathbf{A},\mathbf{T},s,\boldsymbol{c})$ that outputs $\boldsymbol{x}\in \Lambda_{q}^{\bot}(\mathbf{A})$ drawn from a distribution statistically close to $\mathcal{D}_{\Lambda_{q}^{\bot}(\mathbf{A}),s,\boldsymbol{c}}$;
		\item  There is a  probabilistic polynomial-time algorithm \textsf{SamplePre}$(\mathbf{A},\mathbf{T},\boldsymbol{u},s)$ that outputs  $\mathbf{x}\in \Lambda_{q}^{\boldsymbol{u}}(\mathbf{A})$  sampled from a distribution statistically close to  $\mathcal{D}_{\Lambda_{q}^{\boldsymbol{u}}(\mathbf{A}),s}$, whenever $\Lambda_{q}^{\boldsymbol{u}}(\mathbf{A})$ is not empty.
	\end{enumerate}
\end{lemma}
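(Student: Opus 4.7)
My plan is to construct the two sampling algorithms using the randomized nearest-plane technique of Klein and Gentry--Peikert--Vaikuntanathan, and to analyze their output distributions via the smoothing parameter of $\Lambda_q^\bot(A)$. The overarching principle is that the hypothesis $s\geq \|\widetilde{T}\|\cdot\omega(\sqrt{\log m})$ places $s$ above the smoothing parameter $\eta_\epsilon(\Lambda_q^\bot(A))$ for some negligible $\epsilon=\mathrm{negl}(n)$, via the Micciancio--Regev bound $\eta_\epsilon(\Lambda)\leq \|\widetilde{T}\|\cdot\sqrt{\log(2m(1+1/\epsilon))/\pi}$. Above smoothing, the relevant discrete Gaussians behave, up to negligible error, like continuous Gaussians on each coset, which is what makes a coordinate-by-coordinate sampler correct.

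For \textsf{SampleGaussian}$(A,T,s,\mathbf{c})$ I iterate through the Gram--Schmidt vectors in reverse order $\widetilde{\mathbf{t}}_m,\dots,\widetilde{\mathbf{t}}_1$. At step $i$, I set $\mathbf{c}_m=\mathbf{c}$, compute the scalar center $c'_i=\langle \mathbf{c}_i,\widetilde{\mathbf{t}}_i\rangle/\|\widetilde{\mathbf{t}}_i\|^2$, draw a one-dimensional integer $z_i\leftarrow \mathcal{D}_{\mathbb{Z},\,s/\|\widetilde{\mathbf{t}}_i\|,\,c'_i}$ by the standard rejection sampler on $\mathbb{Z}$, and update $\mathbf{c}_{i-1}=\mathbf{c}_i-z_i\mathbf{t}_i$. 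The output $\mathbf{x}=\sum_{i=1}^m z_i\mathbf{t}_i$ lies in $\Lambda_q^\bot(A)$ by definition of the basis $T$. For \textsf{SamplePre}$(A,T,\mathbf{u},s)$, I first produce any particular integer preimage $\mathbf{x}_0$ of $\mathbf{u}$ under $A$ by linear algebra over $\mathbb{Z}_q$ (this exists whenever $\Lambda_q^{\mathbf{u}}(A)\neq\emptyset$), then invoke \textsf{SampleGaussian}$(A,T,s,-\mathbf{x}_0)$ to obtain $\mathbf{v}\in\Lambda_q^\bot(A)$, and output $\mathbf{x}=\mathbf{x}_0+\mathbf{v}$. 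Since $\mathcal{D}_{\Lambda_q^{\mathbf{u}}(A),s}$ is the translate of $\mathcal{D}_{\Lambda_q^\bot(A),s,-\mathbf{x}_0}$ by $\mathbf{x}_0$, the correctness of \textsf{SamplePre} reduces to that of \textsf{SampleGaussian}.

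The main obstacle is the statistical-closeness analysis of \textsf{SampleGaussian}. I would argue by induction on the number of remaining coordinates: conditioned on the choices $z_m,\dots,z_{i+1}$, the task reduces to sampling from $\mathcal{D}_{\Lambda(\mathbf{t}_1,\dots,\mathbf{t}_i),\,s,\,\mathbf{c}_i}$ on the rank-$i$ sublattice. Expressing this measure in Gram--Schmidt coordinates and applying Poisson summation together with the smoothing condition $s\geq\eta_\epsilon$ shows that the marginal of the next coordinate $z_i$ is within statistical distance $2\epsilon$ of $\mathcal{D}_{\mathbb{Z},\,s/\|\widetilde{\mathbf{t}}_i\|,\,c'_i}$, while the conditional distribution of the remaining coordinates is the rank-$(i-1)$ discrete Gaussian centered at the updated $\mathbf{c}_{i-1}$. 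Telescoping over $m$ levels of the recursion yields a total statistical distance of $O(m\epsilon)=\mathrm{negl}(n)$, which is the claimed closeness. Polynomial runtime is immediate, since each iteration consists of an inner-product computation, a vector update, and one one-dimensional discrete Gaussian sample, all executable in time $\mathrm{poly}(n,m,\log q)$.
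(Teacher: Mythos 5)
The paper does not prove this lemma; it is imported verbatim from Gentry--Peikert--Vaikuntanathan \cite{38} as a known tool. Your reconstruction is essentially the original GPV argument — the randomized nearest-plane sampler analyzed via smoothing, with \textsf{SamplePre} reduced to \textsf{SampleGaussian} by translating along an arbitrary preimage $\mathbf{x}_0$ of $\mathbf{u}$ — and it is correct; the only cosmetic point is that the step-by-step analysis really uses the per-coordinate conditions $s/\lVert\widetilde{\mathbf{t}}_i\rVert \geq \omega(\sqrt{\log m}) \geq \eta_\epsilon(\mathbb{Z})$ (and smoothing of the rank-$(i-1)$ sublattices), which follow from the stated hypothesis $s \geq \lVert\widetilde{T}\rVert\cdot\omega(\sqrt{\log m})$ rather than from smoothing of $\Lambda_q^{\bot}(A)$ alone.
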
		

	\vspace{1\baselineskip}
For any parameter $s$ such that $s \geq \omega(\sqrt{\log n})$, the probabilistic algorithm $\mathsf{SampleDom}(1^n, s)$ produces samples following a discrete Gaussian distribution over $\mathbb{Z}^n$. Given a randomly sampled vector $\boldsymbol{x} \leftarrow \mathsf{SampleDom}(1^n, s)$, its output distribution is statistically close to $\mathcal{D}_{\mathbb{Z}^n, s}$. Additionally, the conditional min-entropy of the generated samples is $\omega(\log n)$ \cite{37}.

	\vspace{1\baselineskip}
\begin{lemma}\label{l3.4}(\cite{37})
Let \( n \) and \( h \) be positive integers, and let \( q \) be a prime satisfying \( n \geq 2h \log q \). Then, for all matrices \( \mathbf{A}\in \mathbb{Z}_q^{h \times n} \) except a negligible fraction of \( 2q^{-h} \), and for any \( s \geq \omega(\sqrt{\log n}) \), the distribution of \( \mathbf{\alpha} = \mathbf{A} \cdot \boldsymbol{x} \pmod{q} \) is statistically indistinguishable from uniform over \( \mathbb{Z}_q^h \), where \( \mathbf{x} \) is sampled as \( \boldsymbol{x} \stackrel{\$}{\leftarrow} \textsf{SampleDom}(1^n, s) \).  
\end{lemma}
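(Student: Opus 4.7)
The statement is a standard regularity (leftover hash) lemma for discrete Gaussians, and the strategy is to route it through the smoothing parameter of the $q$-ary lattice $\Lambda_q^{\bot}(A)$. Writing $\mathbf{x}\leftarrow\mathcal{D}_{\mathbb{Z}^n,s}$, the image $A\mathbf{x}\bmod q$ is uniform on $\mathbb{Z}_q^h$ exactly to the extent that the Gaussian mass $\rho_s(\Lambda_q^\perp(A)+\mathbf{c})$ is essentially independent of the coset representative $\mathbf{c}$ for $\mathbf{c}$ ranging over preimages of elements of $\mathbb{Z}_q^h$ (provided $A$ is surjective mod $q$). Thus the plan is: (i) reduce the statistical-distance bound to a smoothing-parameter condition on $\Lambda_q^{\bot}(A)$; (ii) reduce the smoothing-parameter condition to a minimum-distance lower bound on the dual lattice $q\cdot\Lambda_q^{\bot}(A)^{*}=\Lambda_q(A^{\top})$; (iii) show this minimum-distance bound holds for all but a $2q^{-h}$ fraction of $A$, using a union bound governed by $n\ge 2h\log q$.

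For step (i), I would invoke the classical fact that if $s\ge\eta_{\varepsilon}(\Lambda_q^{\bot}(A))$ then for every syndrome $\mathbf{u}\in\mathbb{Z}_q^h$ in the image of $A$,
\[
\Pr_{\mathbf{x}\sim\mathcal{D}_{\mathbb{Z}^n,s}}[A\mathbf{x}\equiv\mathbf{u}\pmod q]\in\left[\frac{1-\varepsilon}{1+\varepsilon},\frac{1+\varepsilon}{1-\varepsilon}\right]\cdot q^{-h},
\]
so the statistical distance from uniform is $O(\varepsilon)$, which is negligible if $\varepsilon$ is negligible. For step (ii), I would use the Micciancio--Regev bound $\eta_{\varepsilon}(\Lambda)\le \sqrt{\ln(2n(1+1/\varepsilon))/\pi}\cdot\lambda_1(\Lambda^{*})^{-1}$ applied to $\Lambda=\Lambda_q^{\bot}(A)$, whose dual (up to scaling) is the lattice spanned by the rows of $A$ modulo $q$. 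Consequently it suffices to prove that with probability at least $1-2q^{-h}$ over $A\leftarrow\mathbb{Z}_q^{h\times n}$, every nonzero vector $\mathbf{z}\in\mathbb{Z}^n$ with $\|\mathbf{z}\|$ below a suitable threshold $\beta$ satisfies $A\mathbf{z}\not\equiv 0\pmod q$; equivalently, no short nonzero integer combination of the rows of $A$ vanishes mod $q$.

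For step (iii), I would estimate the number of candidate short vectors $\mathbf{z}\in\mathbb{Z}^n\setminus\{0\}$ with $\|\mathbf{z}\|\le\beta$ as at most $(2\beta+1)^n$, and observe that for any fixed such $\mathbf{z}$ the probability over uniform $A$ that $A\mathbf{z}\equiv 0\pmod q$ is at most $q^{-h}$ (since a uniform row of $A$ hits the hyperplane orthogonal to $\mathbf{z}$ mod $q$ with probability exactly $q^{-1}$ whenever $\gcd(\mathbf{z},q)=1$; the coordinate-divisibility cases are handled with a mild case distinction using $q$ prime). A union bound then yields a failure probability at most $(2\beta+1)^n q^{-h}$, and the hypothesis $n\ge 2h\log q$ lets me choose $\beta$ polynomially large enough to make $\eta_\varepsilon(\Lambda_q^\bot(A))\le \omega(\sqrt{\log n})$ while keeping the failure probability below $2q^{-h}$; this is where the stated fraction $2q^{-h}$ originates.

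The main obstacle is the delicate balance in step (iii): one must simultaneously pick the threshold $\beta$ large enough so that the resulting smoothing-parameter bound allows $s=\omega(\sqrt{\log n})$, and small enough so that $(2\beta+1)^n q^{-h}\le 2q^{-h}$. The dimension hypothesis $n\ge 2h\log q$ is precisely what makes these two constraints compatible, and verifying this compatibility — together with correctly handling the non-coprime cases in the per-$\mathbf{z}$ probability estimate — is the only nontrivial calculation; steps (i) and (ii) are then direct applications of the smoothing-parameter toolkit already invoked implicitly by Lemma \ref{l3.3}.
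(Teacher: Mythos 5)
The paper does not prove this lemma; it imports it verbatim from \cite{38} (it is Corollary~5.4 there, obtained by combining a surjectivity statement for random $A$ with the smoothing-parameter bound that appears in this paper as Lemma~\ref{l3.5}). So the comparison is against the standard GPV argument. Your overall strategy --- reduce to a smoothing-parameter condition on $\Lambda_q^{\bot}(A)$, then to a minimum-distance bound on the dual, then prove that bound for most $A$ by a union bound --- is exactly the right one, and your step~(i) is correct as stated.

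The execution of steps~(ii)--(iii) contains a genuine error: you bound the minimum distance of the wrong lattice. The Micciancio--Regev bound requires a lower bound on $\lambda_1$ of the \emph{dual} $\Lambda_q^{\bot}(A)^{*}=\frac{1}{q}\Lambda_q(A^{\top})$, i.e.\ on the shortest nonzero vector of the form $A^{\top}\mathbf{s}+q\mathbf{k}$ with $\mathbf{s}\in\mathbb{Z}_q^{h}$, $\mathbf{k}\in\mathbb{Z}^{n}$. Your condition ``every short nonzero $\mathbf{z}$ satisfies $A\mathbf{z}\not\equiv 0 \pmod q$'' is instead a lower bound on $\lambda_1$ of the \emph{primal} lattice $\Lambda_q^{\bot}(A)$, which is not what the smoothing bound needs (and is not ``equivalent'' to the statement about integer combinations of the rows of $A$, as you assert). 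Moreover, with your counting the union bound cannot close: the failure probability $(2\beta+1)^{n}q^{-h}$ is at least $3^{n}q^{-h}\geq 3^{n}2^{-n/2}\gg 1$ precisely because $n\geq 2h\log q$ forces $q^{h}\leq 2^{n/2}$; the hypothesis works \emph{against} you on the primal side. The correct union bound runs over the at most $q^{h}$ nonzero syndromes $\mathbf{s}$: for each, $A^{\top}\mathbf{s}\bmod q$ is uniform on $\mathbb{Z}_q^{n}$ (using $q$ prime), so it has all coordinates in $(-q/4,q/4)$ with probability about $2^{-n}$, giving total failure probability $q^{h}2^{-n}\leq q^{-h}$; this yields $\lambda_1^{\infty}(\Lambda^{*})\geq 1/4$ and hence $\eta_{\epsilon}(\Lambda_q^{\bot}(A))\leq\omega(\sqrt{\log n})$. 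Finally, the factor $2$ in $2q^{-h}$ does not come from the short-vector count at all: it is the union of this event with the event (also of probability at most $q^{-h}$) that the columns of $A$ fail to generate $\mathbb{Z}_q^{h}$, which your step~(i) silently assumes when it restricts to syndromes ``in the image of $A$.''
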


	\vspace{1\baselineskip}
The generalized sampling algorithm $\mathsf{GenSamplePre}$ was first proposed in \cite{40}.  \cite{15} extended this technique by selecting different parameters and expanding the lattice structure, obtaining the following improved lattice extension technique that is more suitable for the subsequent ring signature scheme.

	\vspace{1\baselineskip}
\begin{theorem}\label{l3.444}(Sampling Preimage for Extended Lattice\cite{15})
	Let $k,k_{1},k_{2},k_{3},k_{4}$ be positive integers with $k=k_{1}+k_{2}+k_{3}+k_{4}$. Let $S=[k]=\{1,2,\ldots,k\}$ (interpretable as an index set), and denote $\mathbf{A}_{S}=[\mathbf{A}_{S_{1}}|\mathbf{A}_{S_{2}}|\mathbf{A}_{S_{3}}|\mathbf{A}_{S_{4}}]$, where $\mathbf{A}_{S_{1}}\in \mathbb{Z}_{q}^{h\times k_{1}n}$, $\mathbf{A}_{S_{2}}\in \mathbb{Z}_{q}^{h\times k_{2}n}$, $\mathbf{A}_{S_{3}}\in \mathbb{Z}_{q}^{h\times k_{3}n}$, and $\mathbf{A}_{S_{4}}\in \mathbb{Z}_{q}^{h\times k_{4}n}$. Let $\mathbf{A}_{R}=[\mathbf{A}_{S_{1}}|\mathbf{A}_{S_{3}}]\in \mathbb{Z}_{q}^{h\times (k_{1}+k_{3})n}$, given a short basis $\mathbf{T}_{R}$ for the lattice $\Lambda_{q}^{\perp}(\mathbf{A}_{R})$, and a real number $V\geq \|\widetilde{\mathbf{T}}_{R}\|\omega(\sqrt{\log(k_{1}+k_{3})n})$, $\mathbf{y}\in \mathbb{Z}_{q}^{h}$. Then there exists a probabilistic polynomial-time (PPT) algorithm $\mathsf{GenSamplePre}(\mathbf{A}{S}, \mathbf{A}_{R}, \mathbf{T}_{R}, \mathbf{y}, V)$ that outputs a vector $\boldsymbol{e}\in \mathbb{Z}_{q}^{kn}$ whose distribution is statistically indistinguishable from $D_{\Lambda_{q}^{\mathbf{y}}(\mathbf{A}_{S}), V}$.
\end{theorem}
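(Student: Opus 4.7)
\vspace{1\baselineskip}
\noindent\textbf{Proof proposal.} The plan is to reduce the general preimage-sampling task for $A_{S}$ to the preimage-sampling task for the trapdoored sub-matrix $A_{R}=[A_{S_{1}}\mid A_{S_{3}}]$, by sampling the non-trapdoored coordinates first from an independent spherical discrete Gaussian and then correcting the target syndrome. Concretely, I would have $\mathsf{GenSamplePre}$ perform the following steps: first, independently sample $\mathbf{e}_{S_{2}}\leftarrow \mathsf{SampleDom}(1^{k_{2}n},V)$ and $\mathbf{e}_{S_{4}}\leftarrow \mathsf{SampleDom}(1^{k_{4}n},V)$, whose outputs are statistically close to $\mathcal{D}_{\mathbb{Z}^{k_{2}n},V}$ and $\mathcal{D}_{\mathbb{Z}^{k_{4}n},V}$ respectively; second, compute the corrected syndrome
\[
\mathbf{y}'=\mathbf{y}-A_{S_{2}}\mathbf{e}_{S_{2}}-A_{S_{4}}\mathbf{e}_{S_{4}}\pmod{q};
\]
third, invoke $\mathsf{SamplePre}(A_{R},T_{R},\mathbf{y}',V)$ from Lemma~\ref{l3.3} (whose hypothesis $V\geq\|\widetilde{T}_{R}\|\omega(\sqrt{\log(k_{1}+k_{3})n})$ is exactly the one assumed in the theorem) to obtain $\mathbf{e}_{R}=(\mathbf{e}_{S_{1}},\mathbf{e}_{S_{3}})\in\Lambda_{q}^{\mathbf{y}'}(A_{R})$; finally, output $\mathbf{e}$ with $\mathbf{e}_{S_{i}}$ placed in the coordinates of block $S_{i}$. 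By construction $A_{S}\mathbf{e}=A_{S_{1}}\mathbf{e}_{S_{1}}+A_{S_{2}}\mathbf{e}_{S_{2}}+A_{S_{3}}\mathbf{e}_{S_{3}}+A_{S_{4}}\mathbf{e}_{S_{4}}=\mathbf{y}\pmod{q}$, so $\mathbf{e}\in\Lambda_{q}^{\mathbf{y}}(A_{S})$ with probability $1$; the whole procedure is PPT since each sub-routine is.

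For the distributional claim I would carry out the following analysis. Fix any target $\mathbf{e}^{*}=(\mathbf{e}^{*}_{S_{1}},\mathbf{e}^{*}_{S_{2}},\mathbf{e}^{*}_{S_{3}},\mathbf{e}^{*}_{S_{4}})\in\Lambda_{q}^{\mathbf{y}}(A_{S})$ and compute the probability that the algorithm outputs $\mathbf{e}^{*}$. By the independence of the three sampling steps and Lemma~\ref{l3.3}, this probability is, up to negligible statistical error,
\[
\frac{\rho_{V}(\mathbf{e}^{*}_{S_{2}})}{\rho_{V}(\mathbb{Z}^{k_{2}n})}\cdot\frac{\rho_{V}(\mathbf{e}^{*}_{S_{4}})}{\rho_{V}(\mathbb{Z}^{k_{4}n})}\cdot\frac{\rho_{V}(\mathbf{e}^{*}_{S_{1}},\mathbf{e}^{*}_{S_{3}})}{\rho_{V}(\Lambda_{q}^{\mathbf{y}'^{*}}(A_{R}))},
\]
where $\mathbf{y}'^{*}=\mathbf{y}-A_{S_{2}}\mathbf{e}^{*}_{S_{2}}-A_{S_{4}}\mathbf{e}^{*}_{S_{4}}$. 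Since $\rho_{V}$ is a product Gaussian, the numerator factors as $\rho_{V}(\mathbf{e}^{*})$. For the denominators I would invoke the standard smoothing-parameter fact that, for $V$ above the smoothing parameter of the relevant lattices, $\rho_{V}(\Lambda_{q}^{\mathbf{y}'^{*}}(A_{R}))$ is (up to a $(1+\mathrm{negl})$ factor) the same constant $C$ for every coset, and likewise $\rho_{V}(\mathbb{Z}^{k_{2}n})$ and $\rho_{V}(\mathbb{Z}^{k_{4}n})$ are fixed normalizers. Hence the expression becomes a constant multiple of $\rho_{V}(\mathbf{e}^{*})$, which after renormalization over $\Lambda_{q}^{\mathbf{y}}(A_{S})$ coincides with $\mathcal{D}_{\Lambda_{q}^{\mathbf{y}}(A_{S}),V}(\mathbf{e}^{*})$; summing the negligible per-point errors yields statistical indistinguishability from the target distribution.

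The main obstacle, and the step that needs the most care, is the coset-normalization argument: one has to verify that $V$ indeed exceeds the smoothing parameter of both $\mathbb{Z}^{k_{2}n},\mathbb{Z}^{k_{4}n}$ and of every coset $\Lambda_{q}^{\mathbf{y}'^{*}}(A_{R})$, so that the Gaussian mass on a coset is essentially independent of which coset is chosen. This is where the hypothesis $V\geq\|\widetilde{T}_{R}\|\omega(\sqrt{\log(k_{1}+k_{3})n})$ is crucial: by the Gentry--Peikert--Vaikuntanathan bound, $\|\widetilde{T}_{R}\|\cdot\omega(\sqrt{\log n})$ already upper-bounds the smoothing parameter of $\Lambda_{q}^{\perp}(A_{R})$, and a shift by any coset representative preserves this. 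The remaining portions of the argument (independence of the three draws, correctness of $A_{S}\mathbf{e}=\mathbf{y}$, polynomial running time) are straightforward book-keeping, and the final statistical distance is bounded by the sum of the errors from $\mathsf{SampleDom}$, $\mathsf{SamplePre}$, and the smoothing approximation, each of which is $\mathrm{negl}(n)$.
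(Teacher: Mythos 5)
The paper does not prove this theorem at all---it is imported as a black-box result from reference \cite{15} (which in turn builds on the $\mathsf{GenSamplePre}$ algorithm of Cash--Hofheinz--Kiltz \cite{41}), so there is no in-paper proof to compare against. Your reconstruction is exactly the standard argument from those sources: sample the non-trapdoored blocks from $\mathcal{D}_{\mathbb{Z}^{k_2n},V}$ and $\mathcal{D}_{\mathbb{Z}^{k_4n},V}$, correct the syndrome, complete with $\mathsf{SamplePre}$ on $A_R$, and use the smoothing-parameter bound $\eta_\epsilon(\Lambda_q^\perp(A_R))\leq\|\widetilde{T}_R\|\cdot\omega(\sqrt{\log (k_1+k_3)n})$ to argue that every coset $\Lambda_q^{\mathbf{y}'}(A_R)$ carries essentially the same Gaussian mass, so the per-point probability is proportional to $\rho_V(\mathbf{e}^*)$. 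The only point worth making explicit is the implicit assumption that $A_R$ is surjective modulo $q$ (so that $\Lambda_q^{\mathbf{y}'}(A_R)$ is nonempty for every corrected syndrome, as required by Lemma~\ref{l3.3}); this holds here because $T_R$ is a basis of a full-rank $\Lambda_q^{\perp}(A_R)$ arising from $\textsf{TrapGen}$. Otherwise the proposal is correct and complete in outline.
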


	\vspace{1\baselineskip}
\begin{definition}\label{d3.7}( \emph{Smoothing parameter}\cite{39})
	For any $n$-dimensional lattice $\Lambda$ and any given $\epsilon>0$, the smoothing parameter of the lattice is defined as
	\begin{equation*}
		\eta_{\epsilon}(\Lambda)=\min\left\{s > 0 \mid \rho_{\frac{1}{s}}(\Lambda^{*}) < 1 + \epsilon\right\}.
	\end{equation*}
\end{definition}

	\vspace{1\baselineskip}
The smoothing parameter $\eta_\epsilon(\Lambda_q^\perp(\mathbf{A}))$ is at most $\omega(\sqrt{\log n})$ for some negligible $\epsilon$, and this bound holds for all but an exponentially small fraction of matrices $\mathbf{A}\in \mathbb{Z}_q^{h \times n}$ \cite{37}. The detailed description is as follows:

	\vspace{1\baselineskip}
\begin{lemma}\label{l3.5}(\cite{37}, Lemma 5.3)
	Let $q\geq3$, $h$ and $n$ be positive integers satisfying $n\geq 2h\lg q$. Then there exists a negligible function $\epsilon(n)$ such that
	$\eta_{\epsilon}(\Lambda_{q}^{\bot}(A))<\omega(\sqrt{\log n})$ for all but at most a $q^{-h}$ fraction of $A$ in the $\mathbb{Z}_{q}^{h\times n}$ .
\end{lemma}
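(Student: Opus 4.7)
The plan is to use the dual-lattice characterization of the smoothing parameter combined with an averaging argument over uniform $A$. Recall that $\eta_\epsilon(\Lambda) \le s$ is equivalent to $\rho_{1/s}(\Lambda^*\setminus\{0\}) \le \epsilon$, so it suffices to prove that for some fixed negligible $\epsilon(n)$ and $s=\omega(\sqrt{\log n})$, the inequality $\rho_{1/s}(\Lambda_q^{\bot}(A)^*\setminus\{0\}) \le \epsilon$ holds for all but a $q^{-h}$ fraction of $A\in\mathbb{Z}_q^{h\times n}$.

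The first step is to identify the dual lattice:
\[
\Lambda_q^{\bot}(A)^{*}=\tfrac{1}{q}\,\Lambda_q(A^{\top}),\qquad \Lambda_q(A^{\top})=\{v\in\mathbb{Z}^n : v\equiv A^{\top}z\pmod{q}\text{ for some }z\in\mathbb{Z}_q^h\}.
\]
This set always contains $\mathbb{Z}^n$, and the remaining dual points have the form $u/q$ with $u\in\mathbb{Z}^n\setminus q\mathbb{Z}^n$ whose residue lies in the row span of $A$. For every fixed such $u$, a direct counting argument shows $\Pr_A[u\bmod q\in\mathrm{rowspan}(A)]\le q^{h-n}$, because the row span contains at most $q^h$ of the $q^n$ possible residues.

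Splitting the Gaussian sum on the dual accordingly and using linearity of expectation yields
\[
\mathbb{E}_A\!\left[\rho_{1/s}\!\left(\Lambda_q^{\bot}(A)^{*}\setminus\{0\}\right)\right]\;\le\;\rho_{1/s}(\mathbb{Z}^n\setminus\{0\})\;+\;q^{h-n}\,\rho_{1/s}\!\left(\tfrac{1}{q}\mathbb{Z}^n\right).
\]
The first summand is already negligible for $s=\omega(\sqrt{\log n})$ by a standard Gaussian tail estimate on $\mathbb{Z}^n$. For the second summand, Poisson summation gives $\rho_{1/s}(\tfrac{1}{q}\mathbb{Z}^n)=\rho_{q/s}(\mathbb{Z}^n)$, and an elementary one-dimensional bound of the shape $\rho_{q/s}(\mathbb{Z})\le (1+q/s)\cdot(1+o(1))$ (valid in both regimes $q\ge s$ and $q<s$) produces $q^{h-n}\,\rho_{q/s}(\mathbb{Z}^n)\le q^{h}/s^{n}\cdot(1+o(1))^n$. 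Invoking the hypothesis $n\ge 2h\log q$, which is equivalent to $q^{h}\le 2^{n/2}$, this term is bounded by $2^{n/2}/s^{n}$ and is therefore negligible.

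Finally, I would apply Markov's inequality with the threshold $\epsilon(n):=q^{h}\cdot\mathbb{E}_A[\rho_{1/s}(\Lambda_q^{\bot}(A)^{*}\setminus\{0\})]$. The fraction of $A$ for which $\rho_{1/s}(\Lambda_q^{\bot}(A)^{*}\setminus\{0\})\ge\epsilon$ is at most $q^{-h}$, and $\epsilon\le 2^{n}/s^{n}=(2/s)^{n}$ remains negligible for $s=\omega(\sqrt{\log n})$. The main obstacle I expect is the Poisson-summation estimate for $\rho_{q/s}(\mathbb{Z}^n)$: the bound must hold uniformly whether $q\gtrless s$, since for $q<s$ the sum is close to $1$ whereas for $q\ge s$ it scales like $(q/s)^n$. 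Once that estimate is stated cleanly, the condition $n\ge 2h\log q$ plugs in mechanically to make the final quantity negligible, yielding the stated bound.
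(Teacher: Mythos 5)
The paper never proves this lemma --- it is imported verbatim from \cite{38}, where the argument runs differently: GPV first show that $\lambda_1^{\infty}(\Lambda_q(A^{\top}))\geq q/4$ for all but a $q^{-h}$ fraction of $A$ (via a union bound over $z\in\mathbb{Z}_q^h\setminus\{0\}$), and then invoke the generic bound $\eta_{\epsilon}(\Lambda)\leq \omega(\sqrt{\log n})/\lambda_1^{\infty}(\Lambda^{*})$. Your first-moment/Markov argument on $\rho_{1/s}$ of the dual is a legitimate alternative, and its skeleton --- the identification $\Lambda_q^{\perp}(A)^{*}=\tfrac{1}{q}\Lambda_q(A^{\top})$, splitting off the always-present sublattice $\mathbb{Z}^n$, bounding the expectation of the remainder, and converting $n\geq 2h\lg q$ into $q^{h}\leq 2^{n/2}$ --- is sound. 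One justification is too thin as written: for a \emph{fixed} nonzero residue $\bar{u}$, the fact that $\mathrm{Im}(A^{\top})$ has at most $q^{h}$ elements does not by itself give $\Pr_A[\bar{u}\in\mathrm{Im}(A^{\top})]\leq q^{h-n}$ (the image is not a uniformly random subset; $\bar{u}=0$ lies in it with probability $1$). You need the union bound $\sum_{z\neq 0}\Pr[A^{\top}z=\bar{u}]$, which gives $\leq q^{h-n}$ cleanly when $q$ is prime and requires a short extra computation over divisors of $q$ in the composite case that the hypothesis $q\geq 3$ allows.

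The genuine gap is in the final Markov step. You apply Markov to the \emph{entire} quantity $\rho_{1/s}(\Lambda_q^{\perp}(A)^{*}\setminus\{0\})$ and set $\epsilon(n)=q^{h}\cdot\mathbb{E}_A[\cdot]$, but the expectation contains the deterministic summand $\rho_{1/s}(\mathbb{Z}^n\setminus\{0\})\geq e^{-\pi s^{2}}$, so your threshold satisfies $\epsilon\geq q^{h}e^{-\pi s^{2}}$, which for $q^{h}$ close to $2^{n/2}$ and $s=\omega(\sqrt{\log n})$ (say $s=\log n$) is $2^{n/2}\cdot n^{-\pi\log n}\to\infty$ --- not negligible. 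Your claimed bound $\epsilon\leq(2/s)^n$ silently drops this term. The fix is to apply Markov only to the $A$-dependent part $X_A:=\sum_{u\in\mathbb{Z}^n\setminus q\mathbb{Z}^n}\mathbf{1}[\bar{u}\in\mathrm{Im}(A^{\top})]\,\rho_{1/s}(u/q)$: Markov gives $\Pr_A[X_A\geq q^{h}\mathbb{E}[X_A]]\leq q^{-h}$, and then one takes
\begin{equation*}
\epsilon(n):=\rho_{1/s}\bigl(\mathbb{Z}^n\setminus\{0\}\bigr)+q^{2h-n}\rho_{q/s}(\mathbb{Z}^n)\leq 2n e^{-\pi s^{2}}(1+o(1))+\Bigl(\tfrac{2}{q}+\tfrac{2}{s}\Bigr)^{n},
\end{equation*}
where the first term is negligible because $s=\omega(\sqrt{\log n})$ and the second is exponentially small because $q\geq 3$ and $s\to\infty$ make $2/q+2/s<1$ eventually. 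With that reorganization (and the union-bound patch above) your proof closes.
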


	\vspace{1\baselineskip}
The following lemma demonstrates that vectors drawn from a discrete Gaussian distribution are concentrated within specific bounds with high probability.

	\vspace{1\baselineskip}
\begin{lemma}\label{l3.2}(\cite{39})
	Let $\Lambda$ be an $n$-dimensional lattice, and $T$ be a basis of the lattice $\Lambda$. If $s\geq \parallel \widetilde{T}\parallel\cdot \omega(\sqrt{\log n})$, then for any $\mathbf{c}\in \mathbb{R}^{n}$, we have:
	$$ \mathrm{Pr}\{\parallel \boldsymbol{x}-\boldsymbol{c}\parallel>s\sqrt{n}:\boldsymbol{x}{\leftarrow} \mathcal{D}_{\Lambda,s,\boldsymbol{c}}\}\leq \mathrm{negl}(n) .$$
\end{lemma}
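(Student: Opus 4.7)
The plan is to obtain the concentration bound by combining a standard smoothing-parameter estimate with Banaszczyk's tail inequality for discrete Gaussians. The statement is the classical Micciancio--Regev tail bound, so the proof proceeds in two clean steps: convert the Gram--Schmidt hypothesis on $T$ into a smoothing-parameter hypothesis on $\Lambda$, then apply a Gaussian tail inequality on that lattice.

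First I would show that the condition $s \geq \|\widetilde{T}\|\cdot\omega(\sqrt{\log n})$ guarantees $s \geq \eta_{\epsilon}(\Lambda)$ for some negligible $\epsilon = \epsilon(n)$. This uses the standard upper bound
\[
\eta_{\epsilon}(\Lambda) \;\leq\; \|\widetilde{T}\|\cdot\sqrt{\tfrac{\ln(2n(1+1/\epsilon))}{\pi}},
\]
valid for every basis $T$ of $\Lambda$, which is exactly the kind of estimate underlying Lemma \ref{l3.5}. Choosing $\epsilon = 2^{-n}$ (say), the right-hand side is of order $\|\widetilde{T}\|\cdot\sqrt{\log n}$, so the $\omega(\sqrt{\log n})$ slack in the hypothesis subsumes it and gives $s \geq \eta_{\epsilon}(\Lambda)$ with $\epsilon$ negligible in $n$.

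Second, I would invoke Banaszczyk's tail bound, applied to the shifted lattice Gaussian $\mathcal{D}_{\Lambda,s,\mathbf{c}}$. In its shifted form, when $s \geq \eta_{\epsilon}(\Lambda)$ one has
\[
\rho_{s,\mathbf{c}}\!\bigl(\{\mathbf{x}\in\Lambda : \|\mathbf{x}-\mathbf{c}\| > s\sqrt{n}\}\bigr) \;\leq\; 2^{-n}\cdot \rho_{s,\mathbf{c}}(\Lambda)\cdot\tfrac{1+\epsilon}{1-\epsilon},
\]
where the $(1\pm\epsilon)$ factor comes from comparing the shifted mass $\rho_{s,\mathbf{c}}(\Lambda)$ to the unshifted mass $\rho_s(\Lambda)$ via the smoothing-parameter definition. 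Dividing by $\rho_{s,\mathbf{c}}(\Lambda)$ gives the desired probability bound of $2^{-n}\cdot(1+o(1))$, which is certainly $\mathrm{negl}(n)$.

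The only delicate point — and what I expect to be the main obstacle if one wants a fully self-contained argument — is the shifted version of Banaszczyk's inequality, since his original bound $\rho_s(\Lambda\setminus s\sqrt{n}\,\mathcal{B}) \leq 2^{-2n}\rho_s(\Lambda)$ is stated for $\mathbf{c} = 0$. To transfer it to arbitrary $\mathbf{c}\in\mathbb{R}^n$, I would use the fact that once $s \geq \eta_{\epsilon}(\Lambda)$, the function $\mathbf{c}\mapsto\rho_{s,\mathbf{c}}(\Lambda)$ is nearly constant (up to $(1\pm\epsilon)$), so one can carry out Banaszczyk's Fourier / measure-transference argument on the coset $\Lambda - \mathbf{c}$ with only a negligible loss. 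Beyond this, the remaining work is bookkeeping of constants, and the conclusion follows immediately.
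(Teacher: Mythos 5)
The paper offers no proof of this lemma: it is imported verbatim from \cite{40} (Micciancio--Regev), so there is no in-paper argument to compare against. Your two-step plan --- convert the Gram--Schmidt hypothesis into $s \geq \eta_{\epsilon}(\Lambda)$ for negligible $\epsilon$, then apply the shifted Banaszczyk tail bound $\Pr[\|\mathbf{x}-\mathbf{c}\|>s\sqrt{n}] \leq \frac{1+\epsilon}{1-\epsilon}2^{-n}$ --- is exactly how the cited source establishes the result, so the route is the right one, and your remark that the transfer from $\mathbf{c}=0$ to arbitrary $\mathbf{c}$ rests on the near-constancy of $\mathbf{c}\mapsto\rho_{s,\mathbf{c}}(\Lambda)$ above the smoothing parameter correctly identifies the only non-bookkeeping step.

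The one step that fails as written is the calibration $\epsilon = 2^{-n}$. With that choice, $\sqrt{\ln(2n(1+1/\epsilon))/\pi} = \Theta(\sqrt{n})$, not $\Theta(\sqrt{\log n})$, so the smoothing bound you obtain is $\eta_{\epsilon}(\Lambda) \leq \|\widetilde{T}\|\cdot\Theta(\sqrt{n})$, which the hypothesis $s \geq \|\widetilde{T}\|\cdot\omega(\sqrt{\log n})$ does not dominate; the first step therefore does not yield $s \geq \eta_{\epsilon}(\Lambda)$. The fix is standard and local: write the hypothesis as $s \geq \|\widetilde{T}\|\cdot f(n)$ for a concrete $f(n)=\omega(\sqrt{\log n})$ and set $\epsilon(n) = \exp(-\pi f(n)^2) = n^{-\omega(1)}$. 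This $\epsilon$ is still negligible, and now $\eta_{\epsilon}(\Lambda) \leq \|\widetilde{T}\|\sqrt{\ln(2n(1+1/\epsilon))/\pi} \leq \|\widetilde{T}\|\cdot O(f(n)) \leq s$ (absorbing the additive $\ln(2n)$ term into the $\omega$ slack). With that correction the rest of your argument goes through and gives the claimed $\mathrm{negl}(n)$ tail; the distinction between $2^{-n}$ and $2^{-2n}$ in Banaszczyk's constant is immaterial.
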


\subsection{\textbf{Statistical Properties Related to Discrete Gaussian Distributions}}

\begin{theorem}\label{t3.33}(\cite{25})
Let	$\boldsymbol{t}_i \in \mathbb{Z}^m$ and $\boldsymbol{x}_i$ are mutually independent random variables sampled from a Gaussian distribution $D_{\boldsymbol{t}_i + \Lambda,\sigma}$ over $\boldsymbol{t}_i + \Lambda$ for $i = 1,2,\cdots,k$ in which $\Lambda$ is a lattice and $\sigma \in \mathbb{R}$ is a parameter. Let $\boldsymbol{c} = (c_1,\cdots,c_k) \in \mathbb{Z}^k$ and $g = \mathrm{gcd}(c_1,\cdots,c_k)$, $\boldsymbol{t} = \sum_{i = 1}^{k} c_i \boldsymbol{t}_i$. If $\sigma > \|\boldsymbol{c}\| \eta_{\epsilon}(\Lambda)$ for some negligible number $\epsilon$, then $\boldsymbol{z}= \sum_{i = 1}^{k} c_i \boldsymbol{x}_i$ statistically closes to $D_{\boldsymbol{t} + g\Lambda,\|\boldsymbol{c}\|\sigma}$.
\end{theorem}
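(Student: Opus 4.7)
The plan is to compute $\Pr[\boldsymbol{z}=\boldsymbol{w}]$ directly for each $\boldsymbol{w}$ in the candidate support $\boldsymbol{t}+g\Lambda$ and verify that the ratio with $D_{\boldsymbol{t}+g\Lambda,\,\|\boldsymbol{c}\|\sigma}(\boldsymbol{w})$ is $1\pm\mathrm{negl}(n)$ uniformly in $\boldsymbol{w}$; this immediately yields the required negligible statistical distance.

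First I would pin down the support. Since each $\boldsymbol{x}_i\in\boldsymbol{t}_i+\Lambda$, one has $\boldsymbol{z}\in\boldsymbol{t}+\sum_i c_i\Lambda$, and a Bezout argument using $g\mid c_i$ together with $g=\sum_i a_i c_i$ for integers $a_i$ gives $\sum_i c_i\Lambda=g\Lambda$. By independence of the $\boldsymbol{x}_i$, the probability mass at $\boldsymbol{w}\in\boldsymbol{t}+g\Lambda$ is $\prod_i\rho_\sigma(\boldsymbol{t}_i+\Lambda)^{-1}$ times the sum of $\rho_\sigma(\boldsymbol{x}_1)\cdots\rho_\sigma(\boldsymbol{x}_k)$ taken over tuples in $\prod_i(\boldsymbol{t}_i+\Lambda)$ satisfying $\sum c_i\boldsymbol{x}_i=\boldsymbol{w}$. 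Since $\|\boldsymbol{c}\|\geq 1$, the hypothesis forces $\sigma\geq\eta_\epsilon(\Lambda)$, so standard smoothing (e.g., via Lemma~\ref{l3.5} applied to each coset $\boldsymbol{t}_i+\Lambda$) makes every normalizer $\rho_\sigma(\boldsymbol{t}_i+\Lambda)$ agree with $\rho_\sigma(\Lambda)$ up to a multiplicative factor $1\pm\epsilon$, rendering the denominator effectively $\boldsymbol{w}$-independent.

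Next I would apply an orthogonal decomposition inside $\mathbb{R}^{mk}$. The minimum-norm solution of $\sum c_i\boldsymbol{x}_i^\ast=\boldsymbol{w}$ is $\boldsymbol{x}_i^\ast=(c_i/\|\boldsymbol{c}\|^2)\boldsymbol{w}$, with total squared norm $\|\boldsymbol{w}\|^2/\|\boldsymbol{c}\|^2$. Writing $\boldsymbol{x}_i=\boldsymbol{x}_i^\ast+\boldsymbol{y}_i$ with $\sum c_i\boldsymbol{y}_i=0$ kills the cross terms (since $\sum_i\langle\boldsymbol{x}_i^\ast,\boldsymbol{y}_i\rangle=\|\boldsymbol{c}\|^{-2}\langle\boldsymbol{w},\sum c_i\boldsymbol{y}_i\rangle=0$), so $\rho_\sigma(\boldsymbol{x}_1,\ldots,\boldsymbol{x}_k)=\rho_{\|\boldsymbol{c}\|\sigma}(\boldsymbol{w})\cdot\rho_\sigma(\boldsymbol{y}_1,\ldots,\boldsymbol{y}_k)$. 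The constraint $\boldsymbol{x}_i\in\boldsymbol{t}_i+\Lambda$ confines $(\boldsymbol{y}_1,\ldots,\boldsymbol{y}_k)$ to a coset $K+\boldsymbol{s}_{\boldsymbol{w}}$ of the kernel lattice $K=\{(\boldsymbol{y}_i)\in\Lambda^k:\sum c_i\boldsymbol{y}_i=0\}$, whence $\Pr[\boldsymbol{z}=\boldsymbol{w}]\propto \rho_{\|\boldsymbol{c}\|\sigma}(\boldsymbol{w})\cdot\rho_\sigma(K+\boldsymbol{s}_{\boldsymbol{w}})$.

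The main obstacle is showing $\rho_\sigma(K+\boldsymbol{s}_{\boldsymbol{w}})=(1\pm\epsilon)\rho_\sigma(K)$ uniformly in $\boldsymbol{w}$, which by the smoothing lemma reduces to the bound $\eta_\epsilon(K)\leq\sigma$. The hypothesis $\sigma>\|\boldsymbol{c}\|\,\eta_\epsilon(\Lambda)$ is tailored precisely for this: the short exact sequence $0\to K\to \Lambda^k\to g\Lambda\to 0$ (with surjection $(\boldsymbol{y}_i)\mapsto\sum c_i\boldsymbol{y}_i$) dualizes so that $K^\ast$ receives a copy of $(\Lambda^\ast)^k$ modulo the image of $\boldsymbol{u}\mapsto(c_1\boldsymbol{u},\ldots,c_k\boldsymbol{u})$, and a direct Poisson-summation estimate on this quotient yields $\eta_\epsilon(K)\leq\|\boldsymbol{c}\|\,\eta_\epsilon(\Lambda)$. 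Once this smoothing bound is in hand, combining the two factorizations gives $\Pr[\boldsymbol{z}=\boldsymbol{w}]/D_{\boldsymbol{t}+g\Lambda,\,\|\boldsymbol{c}\|\sigma}(\boldsymbol{w})=1\pm O(\epsilon)$ uniformly across the support, completing the proof.
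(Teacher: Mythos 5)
The paper never proves this statement---it is imported verbatim from Boneh and Freeman \cite{25} as a black box---so there is no in-paper proof to compare against, and I can only assess your argument on its own terms. Your architecture is the standard route to such convolution theorems (and essentially the one in the cited source): compute $\Pr[\boldsymbol{z}=\boldsymbol{w}]$ pointwise, split off the minimum-norm solution $\boldsymbol{x}_i^\ast=(c_i/\|\boldsymbol{c}\|^2)\boldsymbol{w}$ so that the Gaussian weight factors as $\rho_{\|\boldsymbol{c}\|\sigma}(\boldsymbol{w})$ times a Gaussian sum over a coset of the kernel lattice $K$, and then smooth that coset. The support identification via Bezout, the vanishing of the cross terms, and the final normalization are all correct. (Your remark about the normalizers $\rho_\sigma(\boldsymbol{t}_i+\Lambda)$ is superfluous: they are constants independent of $\boldsymbol{w}$ and cancel when you renormalize, so you never need to compare them to $\rho_\sigma(\Lambda)$. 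Also, the coset-smoothing fact you want---$s\ge\eta_\epsilon(L)$ implies $\rho_s(L+\boldsymbol{v})\in\bigl[\tfrac{1-\epsilon}{1+\epsilon},1\bigr]\cdot\rho_s(L)$ for $\boldsymbol{v}\in\mathrm{span}(L)$---is from Micciancio--Regev \cite{40}, not Lemma~\ref{l3.5} of this paper, which is the unrelated statement that a random $q$-ary lattice has small smoothing parameter.)

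The genuine gap is that the entire weight of the theorem rests on the bound $\eta_{\epsilon'}(K)\le\|\boldsymbol{c}\|\,\eta_\epsilon(\Lambda)$, and your justification of it is only a gesture. The subtlety is that $K^\ast=\pi_H\bigl((\Lambda^\ast)^k\bigr)$, where $\pi_H$ is the orthogonal projection onto the kernel subspace $H$ and $H^\perp=\{(c_1\boldsymbol{u},\ldots,c_k\boldsymbol{u})\}$, and projection \emph{shortens} dual vectors: a class represented by $\boldsymbol{v}\in(\Lambda^\ast)^k$ contributes $e^{-\pi s^2\,\mathrm{dist}(\boldsymbol{v},H^\perp)^2}$ to $\rho_{1/s}(K^\ast)$, and this distance can be far smaller than $\|\boldsymbol{v}\|$ (for $\boldsymbol{c}=(1,N)$ and $\Lambda=\mathbb{Z}$, the vector $(0,1)$ projects to length $1/\sqrt{1+N^2}$). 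So a ``direct Poisson-summation estimate'' is not a one-line step: one must show that the factor $\|\boldsymbol{c}\|$ in the Gaussian parameter exactly compensates this shrinkage uniformly over all classes of $(\Lambda^\ast)^k$ modulo $(\Lambda^\ast)^k\cap H^\perp$. The bound is true, and the case $k=2$, $\boldsymbol{c}=(1,1)$, where $K\cong\sqrt{2}\,\Lambda$, shows it is tight, but this lemma \emph{is} the theorem; as written, your proposal asserts rather than proves it.
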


	\vspace{1\baselineskip}
\begin{lemma}\label{l3.777}(\cite{24})
	Let \(\mathbf{A}\in \mathbb{Z}_q^{n\times m}, s > 0, \boldsymbol{u} \in \mathbb{Z}^{m}\), and \(\Lambda = \Lambda_q^{\boldsymbol{u}}(A)\). If \(\boldsymbol{x}\) is sampled from \(\mathcal{D}_{\mathbb{Z}^m, s}\) conditioned on \(\mathbf{A}\boldsymbol{x} \equiv \boldsymbol{u} \pmod{q}\), then the conditional distribution of \(\boldsymbol{x}\) is \(\mathcal{D}_{\Lambda,s}\).
\end{lemma}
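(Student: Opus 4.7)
The plan is to directly invoke the definition of conditional probability and observe that it simplifies to the target density via cancellation of a common normalizing constant. First I would fix an arbitrary $\mathbf{x}_0 \in \Lambda = \Lambda_q^{\mathbf{u}}(A)$ and apply Bayes' rule to express $\Pr[\mathbf{x} = \mathbf{x}_0 \mid A\mathbf{x} \equiv \mathbf{u} \pmod{q}]$ as the ratio of the joint probability $\Pr[\mathbf{x} = \mathbf{x}_0,\; A\mathbf{x} \equiv \mathbf{u} \pmod{q}]$ to the marginal $\Pr[A\mathbf{x} \equiv \mathbf{u} \pmod{q}]$, where $\mathbf{x} \sim \mathcal{D}_{\mathbb{Z}^m, s}$. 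For $\mathbf{x}_0 \notin \Lambda$ the conditional probability is trivially zero, matching $\mathcal{D}_{\Lambda,s}$, so the nontrivial case is $\mathbf{x}_0 \in \Lambda$.

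Next I would simplify the numerator: because $\mathbf{x}_0 \in \Lambda$ automatically satisfies $A\mathbf{x}_0 \equiv \mathbf{u} \pmod{q}$, the event $\{\mathbf{x} = \mathbf{x}_0\}$ is already contained in the conditioning event, and by Definition \ref{d3.6} the numerator equals $\rho_s(\mathbf{x}_0)/\rho_s(\mathbb{Z}^m)$. The denominator then decomposes as a disjoint union over $\mathbf{y} \in \Lambda$, yielding
\[
\Pr[A\mathbf{x} \equiv \mathbf{u} \pmod{q}] \;=\; \sum_{\mathbf{y} \in \Lambda} \frac{\rho_s(\mathbf{y})}{\rho_s(\mathbb{Z}^m)} \;=\; \frac{\rho_s(\Lambda)}{\rho_s(\mathbb{Z}^m)}.
\]

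Taking the ratio, the common factor $\rho_s(\mathbb{Z}^m)$ cancels and leaves $\rho_s(\mathbf{x}_0)/\rho_s(\Lambda)$, which by Definition \ref{d3.6} is precisely $\mathcal{D}_{\Lambda,s}(\mathbf{x}_0)$. Since $\mathbf{x}_0 \in \Lambda$ was arbitrary and the conditional distribution places no mass outside $\Lambda$, the two distributions agree pointwise on $\mathbb{Z}^m$, establishing the claim. There is no substantive obstacle here — the argument is a one-line unwinding of definitions. The only minor point requiring care is that the conditioning is well-defined, i.e.\ $\Pr[A\mathbf{x} \equiv \mathbf{u} \pmod{q}] > 0$; this follows from $\Lambda$ being nonempty, which is implicit in the hypothesis that we sample conditioned on this event.
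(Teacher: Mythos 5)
Your proposal is correct. The paper itself imports this lemma from \cite{24} without giving a proof, so there is nothing to compare against, but your argument is the standard one: the conditioning event $\{A\mathbf{x}\equiv\mathbf{u}\pmod q\}$ is exactly $\{\mathbf{x}\in\Lambda\}$, and Bayes' rule plus cancellation of the common normalizer $\rho_s(\mathbb{Z}^m)$ gives $\rho_s(\mathbf{x}_0)/\rho_s(\Lambda)=\mathcal{D}_{\Lambda,s}(\mathbf{x}_0)$, with the well-definedness of the conditioning handled by $\Lambda\neq\emptyset$.
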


	\vspace{1\baselineskip}
\begin{lemma}\label{gh}(\cite{24})
	Let $k = \text{poly}(n)$ be even, $q = \text{poly}(n) \geq (nk)^2$, and $V = \sqrt{2nk\log q}\log n \geq \omega(\sqrt{\log n})$. Then $\mathcal{D}_{\mathbb{Z}^{n}, V}$ and $\mathcal{D}_{\mathbb{Z}^{n}, \sqrt{\frac{k \pm 2}{k}}V}$ are statistically indistinguishable.
\end{lemma}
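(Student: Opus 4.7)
The plan is to estimate the statistical distance $\Delta(\mathcal{D}_{\mathbb{Z}^n,V},\mathcal{D}_{\mathbb{Z}^n,V'})$ for $V' = \sqrt{(k\pm 2)/k}\,V$ by directly comparing probability mass functions, using Poisson summation to handle the normalization constants and the Gaussian tail bound to control the effective support. Since $V \geq \omega(\sqrt{\log n})$ and the ratio $V'/V = \sqrt{1 \pm 2/k}$ stays within a constant factor of $1$ for $k = \mathrm{poly}(n)$, both parameters exceed the smoothing parameter $\eta_{\epsilon}(\mathbb{Z}^n) = O(\sqrt{\log n})$ for some negligible $\epsilon$; this is the regime in which discrete Gaussian masses track their continuous counterparts up to negligible error, and it underlies every step below.

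Next, I would apply Poisson summation on the self-dual lattice $\mathbb{Z}^n$ to write $\rho_s(\mathbb{Z}^n) = s^n(1+\delta_s)$ with $|\delta_s|\le 2\epsilon$ for $s\in\{V,V'\}$, and combine this with the elementary pointwise identity
\[
\frac{\rho_{V'}(\mathbf{x})}{\rho_V(\mathbf{x})}=\exp\!\left(\pm\frac{2\pi\|\mathbf{x}\|^{2}}{V^{2}(k\pm 2)}\right)
\]
to obtain a closed form for the density ratio
\[
\frac{\mathcal{D}_{\mathbb{Z}^n,V'}(\mathbf{x})}{\mathcal{D}_{\mathbb{Z}^n,V}(\mathbf{x})}=\left(\frac{k}{k\pm 2}\right)^{\!n/2}\!\exp\!\left(\pm\frac{2\pi\|\mathbf{x}\|^{2}}{V^{2}(k\pm 2)}\right)\bigl(1+O(\epsilon)\bigr).
\]

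I would then invoke Lemma \ref{l3.2} to restrict attention to the ``typical'' region $\|\mathbf{x}\|\le V'\sqrt{n}$, on whose complement both distributions place only negligible mass. Inside this region, a first-order Taylor expansion $(k/(k\pm 2))^{n/2}=\exp(\mp n/k + O(n/k^{2}))$ is matched against the exponential factor, and the leading-order contributions of the normalization and the unnormalized density cancel, exactly as in the continuous-Gaussian KL divergence $\mathrm{KL}\bigl(\mathcal{N}(0,\tfrac{V^{2}}{2\pi}I_{n}) \,\|\, \mathcal{N}(0,\tfrac{V'^{2}}{2\pi}I_{n})\bigr)=\Theta(n/k^{2})$. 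Summing $|\mathcal{D}_{\mathbb{Z}^n,V}(\mathbf{x})-\mathcal{D}_{\mathbb{Z}^n,V'}(\mathbf{x})|$ over the typical region and adding the negligible tail from Lemma \ref{l3.2} then yields the desired negligible bound on the statistical distance.

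The main obstacle is making this cancellation between $(k/(k\pm 2))^{n/2}$ and $\exp(\pm 2\pi\|\mathbf{x}\|^{2}/(V^{2}(k\pm 2)))$ quantitative and uniform in $\mathbf{x}$: a naive use of the coarse bound $\|\mathbf{x}\|^{2}\le V'^{2}n$ loses the cancellation and produces an $O(n/k)$ error instead of the sharper $O(\sqrt{n}/k)$ that the continuous analogue suggests. The cleanest workaround I foresee is to couple each discrete Gaussian to its continuous counterpart at the cost of the smoothing error $\epsilon$, bound the statistical distance between the two continuous Gaussians via Pinsker's inequality applied to the KL divergence above, and then transfer the bound back to the discrete setting, which delivers statistical indistinguishability in the parameter regime fixed by the hypotheses $q\ge (nk)^2$ and $V=\sqrt{2nk\log q}\log n$.
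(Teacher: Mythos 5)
The paper never proves Lemma~\ref{gh}; it is imported verbatim from \cite{24}, so there is no in-paper argument to compare yours against, and your proposal has to stand on its own. Your setup is sound as far as it goes: the pointwise identity $\rho_{V'}(\mathbf{x})/\rho_{V}(\mathbf{x})=\exp\bigl(\pm 2\pi\|\mathbf{x}\|^{2}/(V^{2}(k\pm 2))\bigr)$ is correct, the Poisson-summation control of the normalizers is legitimate because $V,V'$ far exceed $\eta_{\epsilon}(\mathbb{Z}^{n})$, and you correctly identify both the cancellation of the leading $\mp n/k$ terms and the reason the naive bound $\|\mathbf{x}\|^{2}\le V'^{2}n$ destroys that cancellation.

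The genuine gap is in the final step: the bound you arrive at is not negligible, and you never verify that it is. Pinsker applied to $\mathrm{KL}=\Theta(n/k^{2})$ yields statistical distance $O(\sqrt{n}/k)$, and since the hypothesis is $k=\mathrm{poly}(n)$, this is at best inverse-polynomial in $n$ (e.g.\ $n^{-1/2}$ when $k=n$), whereas the paper's definition of statistical indistinguishability demands a bound below \emph{every} inverse polynomial. Worse, your estimate is essentially tight: the sufficient statistic $\|\mathbf{x}\|^{2}$ has mean $\approx nV^{2}/(2\pi)$ and standard deviation $\Theta(\sqrt{n}\,V^{2})$ under either distribution, while the two means differ by $\Theta(nV^{2}/k)$, so the true distance is $\Theta\bigl(\min(1,\sqrt{n}/k)\bigr)$ and no sharpening of your intermediate inequalities can push it below inverse-polynomial. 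Your closing claim that the hypotheses $q\ge(nk)^{2}$ and $V=\sqrt{2nk\log q}\log n$ ``deliver statistical indistinguishability'' is therefore unsupported --- neither hypothesis even appears in the final quantity $\sqrt{n}/k$ --- and what your calculation actually reveals is that this route cannot establish the lemma as stated unless $k$ is superpolynomial in $\sqrt{n}$ or ``indistinguishable'' is weakened to ``inverse-polynomially close.'' A secondary flaw: one cannot ``couple each discrete Gaussian to its continuous counterpart at the cost of the smoothing error $\epsilon$,'' since the total variation distance between a discrete and a continuous distribution is $1$; the transfer must instead go through the pointwise mass comparison you already set up, which changes nothing about the non-negligible final bound.
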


\section{Definition and Security Model of LHRS}

\subsection{ \textbf{Definition of LHRS}}

	\begin{definition}\label{2.3}(LHRS scheme)
The linearly homomorphic ring signature is composed of a set of probabilistic polynomial-time (PPT) algorithms $\mathcal{LHRS}$
=(\textsf{Setup},\textsf{HR-Sign},\textsf{Combine},\textsf{HR-Verify}), which are defined as follows:
	
	\begin{enumerate}
	\item[$\bullet$] \textsf{Setup} ($1^{\lambda}$, \textsf{pp}): Given a security parameter $\lambda$ and public parameters \textsf{pp},  this algorithm outputs a key pair $(pk, sk)$. Here, the public parameters \textsf{pp} determine the tag space $\mathcal{T}$ , the message space $\mathcal{M}$,  the signature space $\Sigma$, and and the maximum number of homomorphic operations $k_{0}$.

		\item[$\bullet$] \textsf{HR-Sign}$(\textsf{pk}_{s},\textsf{sk}_{s}, R, \tau, \mathbf{m})$: Given the key pair $(\textsf{pk}_{s},\textsf{sk}_{s})$ of user $s$, the ring’s public key set $R$ (where $\textsf{pk}_{s}\in R$), the dataset label $\tau\in\mathcal{T}$, and the message $\mathbf{m}\in\mathcal{M} $, this algorithm outputs a ring signature $\sigma$ on message $\mathbf{m}$ under label $\tau$, signed using $\textsf{sk}_{s}$.

		\item[$\bullet$] \textsf{Combine}$(R, \tau, \{(c_{j}, \sigma_{j})\}_{j=1}^{\ell})$:  Given the ring $R$, the tag $\tau$, and a set of tuples $\{(c_{j}, \sigma_{j})\}_{j=1}^{\ell})$ (where $\ell\leq k_{0}$), this algorithm outputs a ring  signature $\sigma$ on the message $\sum_{j=1}^{\ell}c_{j}\mathbf{m}_{j}$ under the label tag $\tau$ for   the ring $R$

		\item[$\bullet$] \textsf{HR-Verify}$(R,\tau,\mathbf{m}, \sigma)$:  Given the  ring $R$, the tag $\tau$,  the  message $\mathbf{m}$, and the ring signature $\sigma$, the algorithm outputs 0 (reject) or 1 (accept).
		
	\end{enumerate}
	
\end{definition}

\textbf{Correctness}  requires that: 

(1) For all valid tags 
$\tau\in\mathcal{T} $ and messages $\mathbf{m}\in \mathcal{M}$,  if a ring signature $\sigma$ is generated as $   \sigma\leftarrow\textsf{HR-Sign}(\textsf{pk}_{s},\textsf{sk}_{s}, R, \tau, \mathbf{m})$, then the verification algorithm must satisfy  
$$ 1\leftarrow \textsf{HR-Verify}(R,\tau,\mathbf{m}, \sigma).$$

(2) If $\sigma_{j}\leftarrow\textsf{HR-Sign}(\textsf{pk}_{s},\textsf{sk}_{s}, R, \tau, \mathbf{m}_{j})$, then 
$$  1\leftarrow\textsf{HR-Verify}(R,\tau,\sum_{j=1}^{\ell}c_j\mathbf{m}_{j}, \textsf{Combine}(R, \tau, \{(c_{j}, \sigma_{j})\}_{j=1}^{\ell})).$$

\subsection{ \textbf{Security Model of LHRS}}

In ring signature schemes \cite{13,15}, security typically encompasses two fundamental requirements: anonymity and unforgeability. Similarly, linearly homomorphic ring signature schemes must also satisfy these two properties. Specifically, they should achieve anonymity under full key exposure and unforgeability against insider corruption.  

\vspace{1\baselineskip}
	\begin{definition}\label{2.4}(Anonymity under Full Key Exposure )
The security of the LHRS scheme against anonymity under full key exposure is defined through the following game played between a challenger $\mathcal{B}_{1}$ and a polynomial-time adversary  $\mathcal{A}_{1}$:
	
	\begin{enumerate}
		\item[$\bullet$] \textsf{Setup}: The challenger $\mathcal{B}_1$ runs the $\mathsf{Setup}(1^\lambda, \mathsf{pp})$ algorithm $\ell$ times to generate key pairs $(\textsf{pk}_1, \textsf{sk}_1), \dots, (\textsf{pk}_\ell, \textsf{sk}_\ell)$, where $\ell$ is a game parameter, and sends the public key set $\{\textsf{pk}_s\}_{s=1}^\ell$ to the adversary $\mathcal{A}_1$. 
		
		\item[$\bullet$] \textsf{Queries}: After receiving $\{(\textsf{pk}_s, \textsf{sk}_s)\}_{s=1}^\ell$, the adversary $\mathcal{A}_1$ can make the following two types of queries:
		
		-\textsf{Corruption query:}  The adversary $\mathcal{A}_1$ selects a user $s \in [\ell]=\{1,2,...,\ell\}$ and sends $s$ to the challenger $\mathcal{B}_1$, who responds by returning $\textsf{sk}_s$ to $\mathcal{A}_1$. 
		
		-\textsf{Sign query:} After selecting a user $s$, adversary $\mathcal{A}_2$ adaptively selects a series of message subspaces $V_{si} = \text{span}\{\mathbf{m}_{si}^{(1)}, ..., \mathbf{m}_{si}^{(k_0)}\}$ and queries them to the challenger $\mathcal{B}_2$, for $i = 1, \dots, q_s$, where $q_s$ is the maximum number of allowed queries for user $s$. For each queried subspace $V_{si}$, the challenger $\mathcal{B}_2$ generates a uniformly random tag $\tau_{si} \xleftarrow{\$} \mathcal{T}$.  Then, $\mathcal{B}_2$ computes a signature $\sigma_{si}^{(j)}$ for each message $\mathbf{m}_{si}^{(j)}$ within the subspace, in relation to the tag $\tau_{si}$. Finally, $\mathcal{B}_2$ responds with the tuples $(\mathbf{m}_{si}^{(j)}, \sigma_{si}^{(j)})$ for all $i=1,...,q_s$ and $j=1, \dots, k_0$. 
		
		\item[$\bullet$]  \textsf{Challenge}: Finally, $\mathcal{A}_1$ outputs a challenge tuple $(s_0, s_1, R^*, \tau^*, m^*)$ to $\mathcal{B}_1$, where $s_0$ and $s_1$ are indices satisfying $\textsf{pk}_{s_0} \in R^*$ and $\textsf{pk}_{s_1} \in R^*$, and neither $s_0$ nor  $s_1$ has been corrupted in any prior corruption query. $\mathcal{B}_1$ is then required to generate a signature on $(\tau^*, m^*)$ under ring $R^*$. To do so, $\mathcal{B}_1$  selects a random bit $b \xleftarrow{\$} \{0,1\}$, computes $\sigma_b^* \gets \textsf{HR-Sign}(\textsf{pk}_{s_b}, \textsf{sk}_{s_b}, R^*, \tau^*, m^*)$, and returns $\sigma_b^*$ to $\mathcal{A}_1$. The adversary outputs a guess $b' \in \{0,1\}$ and wins the game if $b' = b$.
	\end{enumerate}
	
The scheme is said to satisfy anonymity  if for every probabilistic polynomial-time adversary $\mathcal{A}_1$, the advantage $\mathbf{Adv}_{LHRS}^{\mathrm{ANON}}(\mathcal{A}_1) = \left|\Pr[b' = b] - \frac{1}{2}\right|$ is negligible in the security parameter.
\end{definition}

\vspace{1\baselineskip}
	\begin{definition}\label{2.44}(Unforgeability against Insider Corruption)
	The security of the LHRS scheme against unforgeability under insider corruption is defined through the following game played between a challenger $\mathcal{B}_{2}$ and a polynomial-time adversary  $\mathcal{A}_{2}$:
	
	\begin{enumerate}
		\item[$\bullet$] \textsf{Setup}: The challenger $\mathcal{B}_2$ runs the $\mathsf{Setup}(1^\lambda)$ algorithm $\ell$ times to generate key pairs $(\textsf{pk}_1, \textsf{sk}_1), \dots, (\textsf{pk}_\ell, \textsf{sk}_\ell)$, where $\ell$ is a game parameter. Let $L=\{\textsf{pk}_s\}_{s=1}^\ell$ and sends the public key set $L$ to the adversary $\mathcal{A}_2$ 
		
		\item[$\bullet$] \textsf{Queries}: After receiving $L=\{\textsf{pk}_s\}_{s=1}^\ell$, the adversary $\mathcal{A}_2$ can make the following two types of queries:
		
		-\textsf{Corruption query:}  The adversary $\mathcal{A}_2$ selects a user $s \in [\ell]=\{1,2,...,\ell\}$ and sends $s$ to the challenger $\mathcal{B}_2$, who responds by returning $\textsf{sk}_s$ to $\mathcal{A}_2$. 
		
		-\textsf{Signing query:} After selecting a user $s$, adversary $\mathcal{A}_2$ adaptively selects a series of message subspaces $V_{si} = \text{span}\{\mathbf{m}_{si}^{(1)}, ..., \mathbf{m}_{si}^{(k_0)}\}$ and queries them to the challenger $\mathcal{B}_2$, for $i = 1, \dots, q_s$, where $q_s$ is the maximum number of allowed queries for user $s$. For each queried subspace $V_{si}$, the challenger $\mathcal{B}_2$ generates a uniformly random tag $\tau_{si} \xleftarrow{\$} \mathcal{T}$.  Then, $\mathcal{B}_2$ computes a signature $\sigma_{si}^{(j)}$ for each message $\mathbf{m}_{si}^{(j)}$ within the subspace, in relation to the tag $\tau_{si}$. Finally, $\mathcal{B}_2$ responds with the tuples $(\mathbf{m}_{si}^{(j)}, \sigma_{si}^{(j)})$for all $i=1,...,q_s$ and $j=1, \dots, k_0$.

		\item[$\bullet$]  \textsf{Output}: Finally, $\mathcal{A}_2$ outputs a tuple $(R^{*}, \tau^{*}, \mathbf{m}^{*}, \sigma^{*})$. The adversary $\mathcal{A}_2$ wins the game if and only if all the following conditions are satisfied:
			\begin{enumerate}
				\item[1.]  $1\leftarrow \textsf{HR-Verify}(R^{*},\tau^{*},\mathbf{m}^{*}, \sigma^{*});$
				\item[2.] $R^{*}\subset L\setminus C$, where $\mathcal{C}$ denotes the set of all corrupted users (i.e., users whose private keys have been obtained by the adversary);
				\item[3.] No valid signature for the tuple $(R^{*}, \tau^{*}, \mathbf{m}^{*})$ was generated during the signing queries in the game;
				\item[4.]  For any pair $(s,i)$, either $\tau^{*} \neq \tau_{si}$ (Type-1 forgery) where $s \in [\ell]$ and $i \in [q_{s}]$, or there exists a pair $(s,i)$ such that $\tau^{*} = \tau_{si}$ (Type-2 forgery) but $\mathbf{m}^{*} \notin V_{si}$.
			\end{enumerate}
	\end{enumerate}
	The adversary $\mathcal{A}_{2}$'s advantage in the above game is defined as:
$\text{Adv}_{\text{LHRS}}^{\text{UNF}}(\mathcal{A}_{2}) = \Pr[\mathcal{A}_{2} \text{ wins}].$
A LHRS scheme is said to be unforgeable if for every probabilistic polynomial-time adversary $\mathcal{A}_{2}$, the advantage $\text{Adv}_{\text{LHRS}}^{\text{UNF}}(\mathcal{A}_{2})$ is negligible.

\end{definition}

\section{The Proposed Scheme}

In this section, we present a lattice-based linearly homomorphic ring signature scheme and prove that the scheme satisfies correctness with overwhelming probability.

\subsection{\textbf{Basic Construction}}

The algorithms $\textsf{TrapGen}$, $\textsf{Decompose}$, and $\textsf{GenSamplePre}$ used in this section are all from Section 2.

\vspace{1\baselineskip}	
Our LHRS scheme $\mathcal{LHRS}$=\{\textsf{Setup},\textsf{HR-Sign},\textsf{Combine}, \textsf{HR-Verify}\} works as follows:

\begin{enumerate}
\item[$\bullet$]  \textsf{Setup} ($1^{\lambda}$, \textsf{pp}): Let $k = \text{poly}(\lambda)$, and let $k_{0} = \frac{k}{2}$ denote the maximum number of homomorphic operations, which is odd. Define $h = \left\lfloor \frac{n}{6 \log q} \right\rfloor \geq k$, $q \geq (nk)^{3}$, and $V = \sqrt{2nk\log q} \cdot \log n$. Let $\mathcal{H}:{0,1}^*\rightarrow \mathbb{Z}_{q}^{h}$ be a hash function, and set the public parameters as $\textsf{pp} = \{q, k_{0}, k, n, h, V, \mathcal{H}\}$. The tag space is $\mathcal{T} = \{0, 1\}^n$, the message space is $\mathcal{M} = \mathbb{F}_2^k$, and the signature space is $\Sigma = \mathbb{Z}^{\ell n}$, where $\ell$ denotes the number of users in the ring $R$.

The algorithm takes as input a security parameter $n$ and the public parameters \textsf{pp}, and generates the key pair for user $s$ as follows:
\begin{enumerate}
	\item [(1)] Compute $(\mathbf{A}_{s}, \mathbf{T}_{s})\leftarrow\textsf{TrapGen}(q,h,n)$;

	\item [(2)] Set the public key as \( \textsf{pk}_s = \mathbf{A}_s, \) and the secret key as \( \textsf{sk}_s = \mathbf{T}_s\), then output the key pair \( (\textsf{pk}_s, \textsf{sk}_s)\) for user \( s \).
\end{enumerate}

\vspace{1\baselineskip}	
\item[$\bullet$]  \textsf{HR-Sign}$(\textsf{pk}_{s}, \textsf{sk}_{s},  R_{\ell },  \tau,  \boldsymbol{m})$: Given the public/private key pair $(\textsf{pk}_{s},  \textsf{sk}_{s})$ of user $s$, and a ring $R_{\ell}$ composed of $\ell$ users' public keys, for notational simplicity, let \(R_\ell = \{\mathbf{A}_1, \dots, \mathbf{A}_{\ell}\}\) where \(\mathbf{A}_s \in R_\ell\). For a tag $\tau \in \{0, 1\}^{n}$ and a message $\boldsymbol{m}=(m_{1}, \dots, m_k) \in \mathbb{F}_{2}^{k}$, the signature is generated as follows:

\begin{enumerate}
    \item [(1)] Let $\mathbf{A}_{R_{\ell }}=[\mathbf{A}_{i_1}|\mathbf{A}_{i_2}|\dots|\mathbf{A}_{i_\ell }]$. Define an index tag vector $ L_{R_{\ell}}=(i_{1}, i_{2}, \dots, i_{\ell})$, which specifies the correspondence between the matrix $\mathbf{A}_{R_{\ell}}$ and the sequence of ring members $\{1, 2, \ldots, \ell\}$, representing the permutation order of ring members in $\mathbf{A}_{R_{\ell}}$;
    \item[(2)] Compute $\bm{\alpha}_{i}=\mathcal{H}(\tau|i)$ for $i=1,2,\dots,k$;
    
    \item [(3)] Compute $(\boldsymbol{u}, \boldsymbol{v})\leftarrow\textsf{Decompose}(\boldsymbol{m})$, where $\boldsymbol{m}=\boldsymbol{u}+\boldsymbol{v}$, $\boldsymbol{u}=(u_{1}, \dots, u_{k}) \in \mathcal{X}$, and $\boldsymbol{v}=(v_{1}, \dots, v_{k})\in \mathcal{Y}$ or $\mathcal{Z}$; 
    \item [(4)] If $\boldsymbol{u}$ and $\boldsymbol{v}$ contain a zero vector (Note: By Theorem \ref{de}, at most one of $\boldsymbol{u}$ or $\boldsymbol{v}$ is zero), let $\boldsymbol{t} = \sum_{j=1}^{k}  m_j\bm{\alpha}_j$ and compute 
    \[
    \boldsymbol{e} \leftarrow \mathsf{GenSamplePre}(\mathbf{A}_{R_{\ell}},  \mathbf{A}_s,  \mathbf{T}_s,  \boldsymbol{t},  V);
    \]
    Otherwise, compute $\boldsymbol{t}(\boldsymbol{u}) = \sum_{j=1}^k u_j \bm{\alpha}_j$ and $\boldsymbol{t}(\boldsymbol{v}) = \sum_{j=1}^k v_j \bm{\alpha}_j$, then generate
    \[
    \boldsymbol{e}(\boldsymbol{u}) \leftarrow \mathsf{GenSamplePre}(\mathbf{A}_{R_{\ell}},  \mathbf{A}_s,  \mathbf{T}_s,  \boldsymbol{t}(\boldsymbol{u}),  V),
    \]
    \[
    \boldsymbol{e}(\boldsymbol{v}) \leftarrow \mathsf{GenSamplePre}(\mathbf{A}_{R_{\ell}},  \mathbf{A}_s,  \mathbf{T}_s,  \boldsymbol{t}(\boldsymbol{v}),  V), 
    \]
    and define $\mathbf{e}= \mathbf{e}(\boldsymbol{u}) + \mathbf{e}(\boldsymbol{v})$. 
    
    \item [(5)] The output ring signature is denoted as $\sigma=( \boldsymbol{e},  L_{R_{\ell}})$.
\end{enumerate}

	\vspace{1\baselineskip}	
\item[$\bullet$] \textsf{Combine}$(R_{\ell},  \tau,  \{(c_{j},  \sigma_{\tau j})\}_{j=1}^{p})$: 
This algorithm takes as input a ring $R_{\ell}$, a tag $\tau$, and a set of tuples $\{(c_{j},  \sigma_{j})\}_{j=1}^{p}$ (where $p \leq k_{0}$ and $c_{j} \in \mathbb{F}_{2}$), where
$$\sigma_{ j} = ( \boldsymbol{e}_{j},  L_{R_{\ell}})\leftarrow\textsf{HR-Sign}(\textsf{pk}_{s},  \textsf{sk}_{s},  R_{\ell},  \tau,  \boldsymbol{m}_{j}).$$   
It outputs a signature for the message $\sum_{j=1}^{p} c_{j}\boldsymbol{m}_{j}$ as
$$\sigma = \left( \sum_{j=1}^{p} c_{j}\boldsymbol{e}_{j},  L_{R_{\ell}}\right). $$

	\vspace{1\baselineskip}	
\item[$\bullet$] \textsf{HR-Verify}$(R_{\ell}, \tau, \boldsymbol{m},  \sigma)$: Given as input a ring \(R_\ell = \{\mathbf{A}_1,  \dots,  \mathbf{A}_{\ell}\}\), a tag $\tau$, a message $\boldsymbol{m}$, and a signature $\sigma=( \boldsymbol{e},  L_{R_{\ell}})$ where $L_{R_{\ell}}=(i_{1}, i_{2}, \dots, i_{\ell})$, the verification process is as follows:

\begin{enumerate}
    \item [(1)] Compute $\mathbf{A}_{R_{\ell}}=[\mathbf{A}_{i_1}|\mathbf{A}_{i_2}|\dots|\mathbf{A}_{i_\ell}]$;  
    \item [(2)] Compute $\bm{\alpha}_{i}=\mathcal{H}(\tau|i)\in \mathbb{Z}_{q}^{h}$ for $i=1,2,\dots,k$, and $\boldsymbol{t}=\sum_{j=1}^{k}m_{j}\bm{\alpha}_{j}$; 
    \item [(3)] Output 1 if both of the following two conditions are satisfied; otherwise, output 0. 
    \begin{enumerate}
        \item[1)] $\parallel \boldsymbol{e}\parallel\leq kV\sqrt{2k\ell n}$; 
        \item[2)] $\mathbf{A}_{R_{\ell}}\cdot \boldsymbol{e}\pmod q =\boldsymbol{t}. $
    \end{enumerate}
\end{enumerate}

	\end{enumerate}

\textbf{Remark}: For the vector decomposition $(\mathbf{u}, \mathbf{v})$ of a message $\mathbf{m}$, when one of the components is a zero vector, the verification of related conclusions becomes trivial; when both components are non-zero vectors, the verification process is representative of the general case. Therefore, to simplify the presentation, in proving the correctness, anonymity, and unforgeability of the scheme, we only need to analyze the case where $\mathbf{m}$ is decomposed into two non-zero vectors.

\subsection{\textbf{Correctness}}

	\begin{theorem}\label{t6.1}
	The above $\mathcal{LHRS}$ scheme  satisfies correctness with overwhelming probability.
\end{theorem}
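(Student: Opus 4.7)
The plan is to split correctness into its two clauses and, within each, verify the syndrome equation $A_{R_\ell}\mathbf{e}\equiv\mathbf{t}\pmod q$ and the norm bound $\|\mathbf{e}\|\leq 2kV\sqrt{2k\ell n}$ separately, following the ``two nonzero components'' case singled out in the Remark.

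For clause (i) — the freshly signed case — the signature is $\mathbf{e}=\mathbf{e}(\mathbf{u})+\mathbf{e}(\mathbf{v})$. The syndrome equation follows by linearity: Theorem~\ref{l3.444} guarantees that each of $\mathbf{e}(\mathbf{u})$ and $\mathbf{e}(\mathbf{v})$ is a preimage of $\mathbf{t}(\mathbf{u})$ and $\mathbf{t}(\mathbf{v})$ respectively under $A_{R_\ell}$, so I sum these, observe that $\mathbf{t}(\mathbf{u})+\mathbf{t}(\mathbf{v})=\sum_j(-1)^{\tau_j}(u_j+v_j)\bm{\alpha}_j=\mathbf{t}$ because \textsf{Decompose} returns $\mathbf{u}+\mathbf{v}=\mathbf{m}$ componentwise. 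For the norm, Theorem~\ref{l3.444} ensures both samples have distribution statistically close to a discrete Gaussian with parameter $V$ on cosets of $\Lambda_q^\perp(A_{R_\ell})\subset\mathbb{Z}^{\ell n}$, and Lemma~\ref{l3.2} gives $\|\mathbf{e}(\mathbf{u})\|,\|\mathbf{e}(\mathbf{v})\|\leq V\sqrt{\ell n}$ except with negligible probability; the triangle inequality then yields $\|\mathbf{e}\|\leq 2V\sqrt{\ell n}\leq 2kV\sqrt{2k\ell n}$ trivially.

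For clause (ii) — the combined case — the signature is $\sum_{j=1}^p c_j\mathbf{e}_j$ with $p\leq k$ and $c_j\in\mathbb{F}_2$. The syndrome equation is immediate: $A_{R_\ell}\sum_j c_j\mathbf{e}_j=\sum_j c_j\mathbf{t}_j=\sum_i(-1)^{\tau_i}\bigl(\sum_j c_j m_{j,i}\bigr)\bm{\alpha}_i$, which is exactly the verifier's recomputed $\mathbf{t}$ for the aggregated message. For the norm bound I expand each $\mathbf{e}_j=\mathbf{e}_j(\mathbf{u})+\mathbf{e}_j(\mathbf{v})$, so $\sum_j c_j\mathbf{e}_j$ is a $\{0,1\}$-combination of at most $2k$ independent Gaussian samples of parameter $V$ over cosets of $\Lambda_q^\perp(A_{R_\ell})$. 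I apply Theorem~\ref{t3.33} with coefficient vector of norm at most $\sqrt{2k}$: provided $V>\sqrt{2k}\,\eta_\epsilon(\Lambda_q^\perp(A_{R_\ell}))$, the sum is statistically close to a Gaussian with parameter $\sqrt{2k}\,V$, and Lemma~\ref{l3.2} then gives $\|\sum_j c_j\mathbf{e}_j\|\leq V\sqrt{2k\ell n}\leq 2kV\sqrt{2k\ell n}$, well within the verification threshold.

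The main obstacle is not any single computation but the bookkeeping around parameter conditions: I must confirm that the trapdoor $T_s$ produced by \textsf{TrapGen}, which by Theorem~\ref{t3.2} satisfies $\|\widetilde{T_s}\|\leq\mathcal{O}(\sqrt{n\log q})$, meets the precondition $V\geq\|\widetilde{T_s}\|\,\omega(\sqrt{\log n})$ of Theorem~\ref{l3.444}, and that $V$ simultaneously exceeds $\sqrt{2k}\,\eta_\epsilon(\Lambda_q^\perp(A_{R_\ell}))$, which by Lemma~\ref{l3.5} reduces to $V\geq\sqrt{2k}\,\omega(\sqrt{\log\ell n})$ for all but a negligible fraction of the $A_{R_\ell}$. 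Both inequalities follow from the choice $V=\sqrt{2nk\log q}\,\log n$, but each appeal to ``statistically close'' in Theorems~\ref{l3.444} and~\ref{t3.33} and to ``with overwhelming probability'' in Lemma~\ref{l3.2} must be aggregated so that the cumulative failure probability across the $p\leq k$ combined samples remains negligible in $n$.
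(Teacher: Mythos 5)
Your proposal is correct and follows the same overall skeleton as the paper's proof: split into the fresh-signature and combined-signature clauses, get the syndrome equation $A_{R_\ell}\mathbf{e}\equiv\mathbf{t}\pmod q$ by linearity from the preimage guarantee of \textsf{GenSamplePre} together with $\mathbf{u}+\mathbf{v}=\mathbf{m}$, and get the norm bound from the Gaussian tail estimate of Lemma~\ref{l3.2} after checking $V\geq\|\widetilde{T}_s\|\cdot\omega(\sqrt{\log n})$ via Theorem~\ref{t3.2}. The one place you genuinely diverge is the norm bound for the combined signature. The paper argues elementarily: each $\mathbf{e}_j$ already satisfies $\|\mathbf{e}_j\|\leq 2V\sqrt{\ell n}$ from clause (i), and since $c_j\in\{0,1\}$ the triangle inequality gives $\|\sum_{j=1}^p c_j\mathbf{e}_j\|\leq 2pV\sqrt{\ell n}\leq 2kV\sqrt{\ell n}$, which is under the threshold $2kV\sqrt{2k\ell n}$ with no further conditions. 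You instead invoke Theorem~\ref{t3.33} to view the sum of at most $2k$ independent Gaussian preimages as a single discrete Gaussian of parameter $\sqrt{2k}\,V$ and apply the tail bound once, obtaining the tighter $V\sqrt{2k\ell n}$. Your route buys a sharper constant but costs you the extra precondition $V>\sqrt{2k}\,\eta_\epsilon(\Lambda_q^\perp(A_{R_\ell}))$ (discharged via Lemma~\ref{l3.5}) and one more ``statistically close'' approximation to track; the paper's route is weaker but unconditional given the per-signature bounds, and for a pure correctness claim that simplicity is all that is needed. Both bounds sit comfortably below the verification threshold, and your closing remark about aggregating the negligible failure probabilities over the $p\leq k$ samples is the right (and sufficient) bookkeeping, so there is no gap.
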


\begin{proof}	
    The verification is carried out by considering two cases: 
    \begin{enumerate}
        \item[(1)] If $\sigma=(\boldsymbol{e}, L_{R_{\ell}})\leftarrow\textsf{HR-Sign}(\textsf{pk}_{s}, \textsf{sk}_{s},  R_{\ell},  \tau,  \boldsymbol{m})$. Next, we prove that the verification conditions hold with overwhelming probability. 
        
        By Theorem \ref{t3.2}, we have $\parallel \widetilde{\mathbf{T}}_{s}\parallel\leq \mathcal{O}( \sqrt{h\log q})$ with overwhelming probability. Given $V = \sqrt{2nk\log q} \cdot \log Ln$, it follows that
        $$\frac{V}{\parallel\widetilde{\mathbf{T}}_{s}\parallel}\geq \sqrt{\frac{2kn}{h}}\cdot \log n \geq \sqrt{\log n}, $$ 
        which implies $V \geq \parallel \widetilde{\mathbf{T}}_{s}\parallel \omega(\sqrt{\log n})$. Therefore, Theorem \ref{l3.444} and Lemma \ref{l3.2} ensure that $\parallel \mathbf{e}\parallel=\parallel \boldsymbol{e}(\boldsymbol{u})+\boldsymbol{e}(\boldsymbol{v})\parallel\leq 2V \sqrt{\ell n}<kV\sqrt{k\ell n}$ with overwhelming probability. Furthermore, Lemma \ref{l3.3} gives the congruence relation:	  
        \begin{align*}
            \mathbf{A}_{R_{\ell}} \cdot\boldsymbol{e}\pmod{q} 
            &= 	\mathbf{A}_{R_{\ell}} \cdot (\boldsymbol{e}(\boldsymbol{u})+\mathbf{e}(\boldsymbol{v}))\pmod{q} \\
            &= \boldsymbol{t}(\boldsymbol{u})+\boldsymbol{t}(\boldsymbol{v}) \\
            &= \sum_{j = 1}^{k}u_{j}\bm{\alpha}_{j} +\sum_{j = 1}^{k}v_{j}\bm{\alpha}_{j} \\
            &= \sum_{j = 1}^{k}(u_{j}+v_{j})\bm{\alpha}_{j} \\
            &= \sum_{j = 1}^{k}m_{j}\bm{\alpha}_{j} = \boldsymbol{t}. 
        \end{align*}
        Thus, the scheme satisfies correctness for the signature of a single message.

        \item[(2)] Consider a set of tuples $\{(c_{j}, \sigma_{j})\}_{j=1}^{p}$, where each signature is given by
        $$\sigma_{ j}=(\boldsymbol{e}_{j}, L_{R_{\ell}})\leftarrow\textsf{HR-Sign}(\textsf{pk}_{s}, \textsf{sk}_{s},  R_{\ell},  \tau,  \boldsymbol{m}_{j}). $$
        We next prove that the following equality holds:
        $$\textsf{HR-Verify}(R_{\ell}, \tau, \sum_{j = 1}^{p}c_{j}\boldsymbol{m}_{j},  \textsf{Combine}(R_{\ell},  \tau,  \{(c_{j},  \sigma_{ j})\}_{j=1}^{p}))\rightarrow
        1. $$

        First, according to the algorithmic definition of \textsf{Combine}, we obtain the following output:
        $$ \sigma=( \sum_{j=1}^{p}\boldsymbol{e}_{j}, L_{R_{\ell}})\leftarrow\textsf{Combine}(R_{\ell},  \tau,  \{(c_{j},  \sigma_{j})\}_{j=1}^{p}). $$
        By Lemma \ref{l3.2}, we have
        $$\Big\|\sum_{j=1}^{p} c_j \boldsymbol{e}_j \Big\|\leq 2p V \sqrt{\ell n} \leq2 k_{0} V \sqrt{\ell n}\leq kV\sqrt{k\ell n}. $$
        Let $\boldsymbol{m} = (m_{1},  \dots,  m_{k}) = \sum_{j=1}^{p} c_{j} \boldsymbol{m}_{j}$, where $\boldsymbol{m}_{j} = (m_{j1},  \dots,  m_{jk})$. Then
        $$	m_{i} = \sum_{j=1}^{p} c_{j} m_{ji},  \quad \text{for } 1 \leq i \leq k. 	$$ 
        Since
        $$\mathbf{A}_{R_{\ell }} \cdot c_{j}\boldsymbol{e}_{j} \pmod{q} = c_{j} \mathbf{A}_{R_{\ell}}\cdot\boldsymbol{e}_{j}\pmod{q} = c_{j} \boldsymbol{t}_{j},  \quad \text{where } \boldsymbol{t}_{j} = \sum_{i=1}^{k} m_{ji} \bm{\alpha}_{i}, $$
        we get
        \begin{align*}
            \mathbf{A}_{R_{\ell}} \cdot \left( \sum_{j=1}^{p} c_{j} \boldsymbol{e}_{j} \right)\pmod{q} &= \sum_{j=1}^{p} c_{j} \boldsymbol{t}_{j} \\
            &= \sum_{j=1}^{p} c_{j} \left( \sum_{i=1}^{k} m_{ji} \bm{\alpha}_{i} \right) \\
            &= \sum_{i=1}^{k}  \left( \sum_{j=1}^{p} c_{j} m_{ji} \right) \bm{\alpha}_{i}\\ 
            &= \sum_{i=1}^{k} m_{i} \bm{\alpha}_{i} 
            = \boldsymbol{t}. 
        \end{align*}
        Therefore, 
        $$\textsf{HR-Verify}(R_{\ell}, \tau, \sum_{j = 1}^{p}c_{j}\boldsymbol{m}_{j},  \textsf{Combine}(R_{\ell},  \tau,  \{(c_{j},  \sigma_{ j})\}_{j=1}^{p}))\rightarrow
        1. $$
    \end{enumerate}
\end{proof}

\section{Anonymity and Unforgeability}

\subsection{\textbf{Anonymity}}

Next, we prove that the scheme described above satisfies anonymity under full key exposure. This proof establishes unconditional anonymity by demonstrating that challenge signatures generated by distinct ring members are statistically indistinguishable. The core of the proof relies on the statistical properties of Gaussian sampling over lattices, rather than a reduction to any computational hardness problem.

\begin{theorem}\label{anonymity}
The $\mathcal{LHRS}$ scheme constructed in Section 4 satisfies anonymity under full key exposure.
\end{theorem}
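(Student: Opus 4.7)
The plan is to show that the challenge signature $\sigma_b^*$ has a distribution that is (statistically) independent of the bit $b$, so no PPT adversary can distinguish the two cases with non-negligible advantage. Write $\sigma_b^* = (\mathbf{e}_b^*, L_{R^*})$. The label $L_{R^*}$ merely encodes the permutation of the ring members and so is a function solely of $R^*$, not of the true signer; hence the entire task reduces to arguing that the vector component $\mathbf{e}_b^*$ carries no information about $s_b$.

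First, I would fix the data visible to the adversary: the challenge ring $R^*$, the target $\mathbf{t}^* = \sum_{j}(-1)^{\tau_j^*} m_j^* \bm{\alpha}_j$, the decomposition $(\mathbf{u}^*,\mathbf{v}^*)$ of $\mathbf{m}^*$, and the corresponding targets $\mathbf{t}(\mathbf{u}^*),\mathbf{t}(\mathbf{v}^*)$. Each of these is determined by $(R^*,\tau^*,\mathbf{m}^*)$ and \textsf{pp} alone. Applying Theorem \ref{l3.444} to the call $\mathsf{GenSamplePre}(A_{R^*},A_{s_b},T_{s_b},\mathbf{t}(\mathbf{u}^*),V)$ shows that its output is statistically indistinguishable from $\mathcal{D}_{\Lambda_q^{\mathbf{t}(\mathbf{u}^*)}(A_{R^*}),V}$, a distribution whose specification involves only $A_{R^*}$, $\mathbf{t}(\mathbf{u}^*)$ and $V$ and is therefore independent of $s_b$. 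The precondition $V\geq \|\widetilde T_{s_b}\|\,\omega(\sqrt{\log n})$ is ensured by Theorem \ref{t3.2} together with the choice $V=\sqrt{2nk\log q}\log n$, exactly as verified in the correctness proof. An identical argument handles $\mathbf{e}_b^*(\mathbf{v})$.

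Next, since the two preimage samples $\mathbf{e}_b^*(\mathbf{u})$ and $\mathbf{e}_b^*(\mathbf{v})$ come from independent invocations, I would invoke Theorem \ref{t3.33} with $k=2$ and $c_1=c_2=1$: provided $V>\sqrt{2}\,\eta_\epsilon(\Lambda_q^\perp(A_{R^*}))$, which holds by Lemma \ref{l3.5} together with the parameter choice, the sum $\mathbf{e}_b^* = \mathbf{e}_b^*(\mathbf{u}) + \mathbf{e}_b^*(\mathbf{v})$ is statistically close to $\mathcal{D}_{\Lambda_q^{\mathbf{t}^*}(A_{R^*}),\sqrt{2}V}$, again a distribution depending only on public data. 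Consequently $\sigma_0^*$ and $\sigma_1^*$ have statistical distance $\mathrm{negl}(n)$, so $\bigl|\Pr[b'=b]-\tfrac12\bigr|$ is negligible. Finally, I would note that the same reasoning applies to every answer given in the \textsf{Sign} oracle (for any pair $(s,i)$ including $s_0,s_1$), so conditioning on the full transcript preserves the indistinguishability by a standard hybrid argument.

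The main technical obstacle will be the bookkeeping in the decomposition branch: one must check that Theorem \ref{t3.33}'s convolution identity continues to apply even though each summand is only statistically (not exactly) Gaussian, and that the resulting width $\sqrt{2}V$ and coset $\Lambda_q^{\mathbf{t}^*}(A_{R^*})$ are genuinely independent of $s_b$. A subsidiary subtlety is checking that full key exposure does not break the argument: because each $T_{s_b}$ only appears inside $\mathsf{GenSamplePre}$ and is discarded in the output distribution's specification via Theorem \ref{l3.444}, knowing $sk_1,\dots,sk_\ell$ confers no advantage in guessing which trapdoor was actually used.
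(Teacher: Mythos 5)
Your proposal is correct and follows essentially the same route as the paper's proof: both argue that $\mathsf{GenSamplePre}$'s output distribution depends only on $(A_{R^*},\mathbf{t}(\cdot),V)$ and not on which trapdoor was used (Theorem \ref{l3.444}), and then apply Theorem \ref{t3.33} to conclude that $\mathbf{e}_b^*=\mathbf{e}(\mathbf{u}^*)+\mathbf{e}(\mathbf{v}^*)$ is statistically close to $\mathcal{D}_{\Lambda_q^{\mathbf{t}^*}(A_{R^*}),\sqrt{2}V}$, a distribution independent of $b$. Your additional remarks on the signer-independence of $L_{R^*}$ and on conditioning over the full query transcript are sound refinements of the same argument rather than a different approach.
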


\begin{proof}

    Suppose there exists an adaptive adversary $\mathcal{A}_1$ that attacks the proposed LHRS scheme in accordance with the definition of anonymity under full key exposure. We then construct a polynomial-time algorithm $\mathcal{C}_1$ to simulate the attack environment for $\mathcal{A}_1$. 
    \begin{enumerate}
        \item  \textsf{Initialization}: The challenger $\mathcal{C}_1$ runs the $\mathsf{Setup}(1^{\lambda},  \mathsf{pp})$ algorithm $q_{E}$ times to generate key pairs $(\textsf{pk}_1,  \textsf{sk}_1), \dots, (\textsf{pk}_{q_{E}},  \textsf{sk}_{q_{E}})$, where $\textsf{pk}_{i}=\mathbf{A}_{i}$ and $\textsf{sk}_{i}=\mathbf{T}_{i}$. The public parameter $\mathsf{pp}$ is the same as defined in Section 4 of the proposed scheme. The challenger $\mathcal{C}_1$ stores the tuple $( s,  \textsf{pk}_{s},  \textsf{sk}_{s} )$ (where $1 \leq s \leq q_{E}$) in the list $L_1$, and provides $\{\textsf{pk}_{1},  \textsf{pk}_{2},  \ldots,  \textsf{pk}_{q_{E}}\}$ to the adversary $\mathcal{A}_1$. 

        \item  \textsf{Query Phase}: The adversary $\mathcal{A}_{1}$ makes the following two types of queries: 

        -\textsf{Corruption Query}: The adversary $\mathcal{A}_{1}$ selects a user $s \in \{1,  2,  \dots,  q_E\}$ and sends it to the challenger $\mathcal{C}_1$. $\mathcal{C}_1$ queries the list $L_1$ to retrieve the tuple $(s,  \textsf{pk}_s,  \textsf{sk}_s)$, and returns $\textsf{sk}_s$ to $\mathcal{A}_1$. 

        -\textsf{Signature Query}: The adversary $\mathcal{A}_1$ selects and sends the tuple $(s,  R_\ell,  V_{si})$ to $\mathcal{C}_1$, where the basis vectors of the message subspace $V_{si}$ are $\{\boldsymbol{m}_{si}^{(j)}\}_{j=1}^{k_{0}}$ and it satisfies $A_s \in R_\ell$, with $1 \leq \ell \leq q_E$, $1 \leq i \leq q_s$, and $q_s$ denoting the maximum number of queries that user $s$ can make for the message subspace $V_{si}$ under the ring $R_\ell$. $\mathcal{C}_{1}$ first generates a random tag $\tau_{si} \stackrel{\$}{\leftarrow} \{0, 1\}^{n}$ for $V_{si}$, then generates the signature $\sigma_{\tau_{si}}^{(j)}=( \boldsymbol{e}_{si}^{(j)}, L_{R_{\ell}})$ following the steps of the real signature algorithm. 

        The challenger $\mathcal{C}_1$ returns the tuple $( \boldsymbol{m}_{si}^{(j)},  \sigma_{\tau_{si}}^{(j)})$ to the adversary $\mathcal{A}_1$, where $\textsf{pk}_s \in R_{\ell}$, $\ell \in [q_{E}]$, $1 \leq i \leq q_{s}$, and $1 \leq j \leq k_{0}$. 

        \item[3. ]  \textsf{Challenge Phase}: Finally, the adversary $\mathcal{A}_1$ outputs a challenge tuple $$(s_{0},  s_{1},  R_{\ell^{*}},  \tau^{*},  \boldsymbol{m}^{*}), $$ where $s_{0}$ and $s_{1}$ are indices satisfying $\textsf{pk}_{s_{0}} \in R_{\ell^{*}}$ and $\textsf{pk}_{s_{1}} \in  R_{\ell^{*}}$, and $(\tau^{*},  \boldsymbol{m}^{*})$ is the data message to be signed for the ring $R_{\ell^{*}}$. $\mathcal{C}_1$ randomly selects a bit $b \in \{0, 1\}$, generates the signature $\sigma_{b}^{*}=( \boldsymbol{e}_{b}^{*}, L_{R_{\ell^*}})$ following the same steps as the signature query, and provides $\sigma_{b}^{*}$ to $\mathcal{A}_1$. Finally, $\mathcal{A}_1$ outputs a bit $b^{\prime}$. 
    \end{enumerate}

    Since all responses in the environment simulated by $\mathcal{C}_1$ are generated according to the real algorithm, from the perspective of the adversary $A_{1}$, the behavior of $\mathcal{C}_1$ is exactly that of a real anonymity security experiment.  Furthermore, note that $$V\geq\omega(\sqrt{\log q_{E}n})\geq\omega(\sqrt{\log \ell^{*}n})\geq\eta_{\epsilon}(\Lambda_{q}^{\bot}(A_{R_{\ell^*}})), $$
    regardless of the value of $b$, by Lemma \ref{t3.33}, the distribution of $\boldsymbol{e}_{b}^{*}=\boldsymbol{e}(\boldsymbol{u}_{b}^{*})+\boldsymbol{e}(\boldsymbol{v}_{b}^{*})$ is statistically close to $$\mathcal{D}_{\Lambda_{q}^{\boldsymbol{t}}(A_{R_{\ell^*}}), \sqrt{2}V},$$ where $\boldsymbol{t}= \sum_{j=1}^{k}m_{j}^{*}\bm{\alpha}_{j}$. Therefore, they are computationally indistinguishable. 

    It follows that any polynomial-time adversary $\mathcal{A}_{1}$ cannot distinguish between $\boldsymbol{e}_{0}^{*}$ and $\boldsymbol{e}_{1}^{*}$ with non-negligible probability, i.e.,
    $$\mathbf{Adv}_{LHRS}^{\mathrm{ANON}}(\mathcal{A}_1) = \left|\Pr[b' = b] - \frac{1}{2}\right|=\mathrm{negl}(n). $$
    Thus, the scheme satisfies anonymity. This completes the proof. 
\end{proof}

\subsection{\textbf{Unforgeability}}

Next, we demonstrate that in schemes with smaller ring sizes (for example, less than or equal to $\mathcal{O}(\log n)$), the aforementioned $\mathcal{LHRS}$ satisfies unforgeability.

\vspace{1\baselineskip}	

\begin{theorem}\label{unforgeability}
    \normalfont
    Suppose that $\mathsf{SIS}_{q, h, \ell n, \beta}$ is hard, where $\ell$ denotes the size of the challenge ring and $\beta=2kV\sqrt{2k\ell n}$. Then the $\mathcal{LHRS}$ scheme constructed in this paper achieves unforgeability against internal corruption. 
    
    More specifically, for a polynomial-time adversary $\mathcal{A}_2$ that is allowed to access rings of maximum size $q_{E} (\leq \mathcal{O}(\log n))$ and where each user $s$ can make at most $q_{s}$ signature queries (with respect to message subspaces) in any ring, a polynomial-time algorithm $\mathcal{C}_{2}$ can be constructed that solves the $\mathsf{SIS}_{q, h, \ell n, \beta}$ problem with the following advantage:
    
    $$
    \mathbf{Adv}_{q, h, \ell n, 2V}^{\mathrm{SIS}}(\mathcal{C}_{2}) \geq \frac{\mathbf{Adv}_{\mathrm{LHRS}}^{\mathrm{UNF}}(\mathcal{A}_{2})}{2q_{E} \binom{q_{E}}{q_{E}/2}}-\mathrm{negl}(n). 
    $$
\end{theorem}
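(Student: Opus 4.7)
The plan is to build a PPT reduction $\mathcal{B}_2$ that, given a SIS challenge matrix $A^\star \in \mathbb{Z}_q^{h \times \ell n}$, leverages any successful forger $\mathcal{A}_2$ to extract a short nonzero $\mathbf{x} \in \mathbb{Z}^{\ell n}$ with $A^\star \mathbf{x} \equiv 0 \pmod q$ and $\|\mathbf{x}\| \le \beta = 4kV\sqrt{2k\ell n}$. The reduction first makes the guesses that account for the loss factor: (i) a uniformly random $\ell$-subset $S^* \subseteq [q_E]$, hoping it coincides with the indices of the final challenge ring $R^*$; (ii) a fair coin predicting whether $\mathcal{A}_2$ will submit a Type-1 or Type-2 forgery; and (iii) in the Type-2 branch, an index $s^* \in [q_E]$ identifying which signing query's tag will be reused. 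These guesses are jointly correct with probability at least $\tfrac{1}{2 q_E \binom{q_E}{q_E/2}}$. For each $s \in S^*$, $\mathcal{B}_2$ plants a distinct block of $n$ columns of $A^\star$ as the public key $A_s$, so that $A_{R^*}$ equals $A^\star$ up to the column permutation recorded in $L_{R^*}$, keeping no trapdoor; for $s \in [q_E] \setminus S^*$, it invokes $\textsf{TrapGen}(q,h,n)$ and stores $T_s$. The public vectors $\bm{\alpha}_j$ are programmed by sampling Gaussian preimages $\mathbf{r}_j \leftarrow \textsf{SampleDom}(1^{\ell n}, V)$ and setting $\bm{\alpha}_j := A^\star \mathbf{r}_j \bmod q$; Lemma~\ref{l3.4} ensures the $\bm{\alpha}_j$'s are statistically close to uniform, so the simulated public parameters are indistinguishable from the honest distribution.

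Corruption queries on $s \in S^*$ trigger an abort (which is consistent with a correct guess of $S^*$, since the winning condition $R^* \subseteq L \setminus \mathcal{C}$ forbids $\mathcal{A}_2$ from corrupting any user in $S^*$), and $T_s$ is returned otherwise. Signing queries $(s, R_\ell, V_{si})$ are answered via $\textsf{GenSamplePre}$ with the trapdoor $T_{s'}$ of any non-target $s' \in R_\ell \setminus S^*$; Theorem~\ref{l3.444} guarantees that the output distribution does not depend on which in-ring trapdoor is used, so the substitution is undetectable and the anonymity argument is reused inside the reduction. When the queried ring lies entirely inside $S^*$, $\mathcal{B}_2$ falls back on the planted $\mathbf{r}_j$'s: after running $\textsf{Decompose}(\mathbf{m}) \to (\mathbf{u}, \mathbf{v})$ exactly as in signing, it releases $\mathbf{e} = \sum_j (-1)^{\tau_j} u_j \mathbf{r}_j + \sum_j (-1)^{\tau_j} v_j \mathbf{r}_j$, whose distribution is statistically close to the honest Gaussian preimage by Theorem~\ref{t3.33} together with Lemma~\ref{gh}.

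When $\mathcal{A}_2$ outputs a forgery $(R^*, \tau^*, \mathbf{m}^*, \sigma^* = (\mathbf{e}^*, L_{R^*}))$ satisfying the winning conditions, $\mathcal{B}_2$ computes a second valid preimage $\mathbf{e}'$ for the same syndrome $\mathbf{t}^* = \sum_j (-1)^{\tau_j^*} m_j^* \bm{\alpha}_j$: for Type-1, $\mathbf{e}'$ is the linear combination of the planted $\mathbf{r}_j$'s dictated by $(\tau^*, \mathbf{m}^*)$ after decomposition; for Type-2, $\mathbf{e}'$ is assembled by linearly combining the already-released signatures of the query $(s^*, i^*)$ with a fresh preimage component that accounts for the part of $\mathbf{m}^*$ lying outside $V_{s^*i^*}$. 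In both cases $A^\star(\mathbf{e}^* - \mathbf{e}') \equiv 0 \pmod q$ and both $\|\mathbf{e}^*\|, \|\mathbf{e}'\| \le 2kV\sqrt{2k\ell n}$, so $\mathbf{x} := \mathbf{e}^* - \mathbf{e}'$ satisfies $\|\mathbf{x}\| \le \beta$ by the triangle inequality, and is a valid SIS solution whenever it is nonzero.

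The main obstacle will be proving $\mathbf{x} \neq 0$ with non-negligible probability, which I would establish through a min-entropy argument. For Type-1 forgeries, the tag $\tau^*$ was never released, so $\mathcal{A}_2$'s view is independent of the particular preimage $\mathbf{e}'$ that $\mathcal{B}_2$ would construct; combining the min-entropy of Gaussian preimages (Lemma~\ref{l3.777}) with the smoothing condition $V \ge \eta_\epsilon(\Lambda_q^\bot(A_{R^*}))$ guaranteed by Lemma~\ref{l3.5}, and invoking Lemma~\ref{l3.1}, the conditional min-entropy of $\mathbf{e}'$ given $\mathbf{e}^*$ remains $\omega(\log n)$, so $\Pr[\mathbf{e}' = \mathbf{e}^*]$ is negligible. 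For Type-2 forgeries, the condition $\mathbf{m}^* \notin V_{s^*i^*}$ forces $\mathbf{e}'$ to incorporate a fresh Gaussian sample on the ``new'' direction of $\mathbf{m}^*$ that was never exposed to $\mathcal{A}_2$, and the same min-entropy argument applies. Multiplying the resulting non-negligible collision-avoidance probability by the guessing factor $\tfrac{1}{2 q_E \binom{q_E}{q_E/2}}$ yields the advertised SIS advantage bound.
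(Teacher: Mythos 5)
Your reduction is structurally the same as the paper's: guess the challenge ring and embed the SIS matrix as its members' public keys, program the $\bm{\alpha}_j$ as $A^\star\mathbf{r}_j$ for planted short Gaussian preimages, answer signing queries with \textsf{TrapGen} trapdoors outside the target ring and with linear combinations of the planted preimages inside it, and extract $\mathbf{e}^*-\mathbf{e}'$ from the forgery, arguing non-triviality via the conditional min-entropy of the preimages. Two concrete points diverge from (and fall short of) the paper's argument. First, the Gaussian parameter: the paper samples $\mathbf{x}_j\leftarrow\textsf{SampleDom}(1^{\ell n},\tfrac{1}{\sqrt{k/2}}V)$, not width $V$ as you propose. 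This rescaling is what makes the simulated target-ring signatures match the real ones: a signature there is a $\pm1$ combination of roughly $k$ of the planted vectors, so by Theorem~\ref{t3.33} its width is about $\sqrt{k}$ times the individual width, and Lemma~\ref{gh} only absorbs a $\sqrt{(k\pm2)/k}$ discrepancy. With your choice of $V$ the simulated signatures are wider than honest ones by a factor of about $\sqrt{k/2}$, so the claimed statistical closeness "by Theorem~\ref{t3.33} together with Lemma~\ref{gh}" does not hold and the simulated signing oracle is distinguishable from the real one; the fix is just to rescale.

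Second, your Type-1/Type-2 case split is unnecessary and your probability accounting does not match your own guesses. The paper uses a single extraction for both forgery types, $\mathbf{e}'=\sum_i(-1)^{\tau_i^*}m_i^*\mathbf{x}_i$, so no coin flip and no guess of which query's tag is reused is needed; the non-triviality argument (min-entropy of the $\mathbf{x}_i$ conditioned on the adversary's view, using that for Type-2 the winning condition forces $\mathbf{m}^*\notin V_{s^*i^*}$ so the relevant combination was never released) covers both cases uniformly. As written, you list three independent guesses --- the ring subset, the type coin, and an index $s^*\in[q_E]$ --- which jointly succeed with probability at most $\tfrac{1}{2q_E^2\binom{q_E}{q_E/2}}$ rather than the advertised $\tfrac{1}{2q_E\binom{q_E}{q_E/2}}$; dropping the superfluous guesses (as the paper does) recovers the stated bound. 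Also note that your "fresh preimage component" in the Type-2 branch can only be realized through the planted $\mathbf{r}_j$'s, since $\mathcal{B}_2$ holds no trapdoor for $A_{R^*}$, at which point your extraction collapses to the paper's single formula anyway.
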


\begin{proof}
    Suppose the challenger $\mathcal{C}_{2}$ receives a challenge instance $\mathbf{A}_{R_{\boldsymbol{w}}} \in \mathbb{Z}_{q}^{h \times \ell n}$ of the $\mathsf{SIS}_{q, h, \ell n, \beta}$ problem, whose goal is to find a non-zero short vector $\mathbf{e}_{0}$ such that $\mathbf{A}_{R_{\boldsymbol{w}}}\boldsymbol{e}_{0} \equiv \mathbf{0} \pmod{q}$ and $\|\boldsymbol{e}_{0}\| \leq \beta$. 
    
    The challenger $\mathcal{C}_{2}$ knows that the adversary $\mathcal{A}_{2}$ can perform at most $q_{E}$ key generation operations (i.e., there are at most $q_{E}$ users in the system), and each user $s$ can make at most $q_{s}$ signature queries (relative to the number of message subspaces). $\mathcal{C}_{2}$ attempts to guess which ring $R^{*}$ the adversary $\mathcal{A}_{2}$ will ultimately choose for forgery and its size $\ell$. It randomly selects an $\ell \in [q_{E}]$ (guessing the ring size) and an index vector $\boldsymbol{w}=(w_1, \dots, w_{\ell})\in [q_{E}]^{\ell}$ (guessing the ring member indices), and sets the target ring as $R_{\boldsymbol{w}}=\{\mathbf{A}_{w_1}, \dots, \mathbf{A}_{w_{\ell}}\}$. 
    
    \begin{enumerate}
        \item[$\bullet$] \textsf{Initialization}: In the simulated system, the sizes of parameters $q, k_{0}, k, n, h, V$ follow the same configuration as specified for the proposed scheme in Section 6.3.1. $\mathcal{C}_{2}$ generates public/private key pairs for all $q_{E}$ users as follows:
        
        \begin{enumerate}
            \item [] Case 1: $s\notin \boldsymbol{w}$. $\mathcal{C}_2$ invokes the algorithm $\textsf{TrapGen}(q, h, n)$ to generate a trapdoor matrix pair $(\mathbf{A}_s, \mathbf{T}_s)$ for user $s$. Let $\textsf{pk}_{s} = \mathbf{A}_s$ and $\textsf{sk}_{s} = \mathbf{T}_s$. Store the tuple $(s, \textsf{pk}_s, \textsf{sk}_s)$ in the initially empty list $L_1$. 
            
            \item [] Case 2: $s\in \boldsymbol{w}$.  
            Assume $s = w_{s^{*}}$, i.e., user $s$ is the $s^{*}$-th member of the target ring $R_{\boldsymbol{w}}$. Represent the matrix $\mathbf{A}_{R_{\boldsymbol{w}}}$ in block-column form as $\mathbf{A}_{R_{\mathbf{w}}}=[\mathbf{A}_{w_{1}}|\mathbf{A}_{w_{2}}|\dots|\mathbf{A}_{w_{\ell}}]$. $\mathcal{C}_{2}$ assigns the $s^{*}$-th block $\mathbf{A}_{w_{s^*}}$ of the SIS instance matrix $\mathbf{A}_{R_{\boldsymbol{w}}}$ to user $s$, and sets its public key as $\textsf{pk}_{s}= \mathbf{A}_{s}= \mathbf{A}_{w_{s^*}}$. Finally, $\mathcal{C}_{2}$ also stores $(s, \textsf{pk}_{s})$ in the list $L_1$. 
        \end{enumerate}
        
        The challenger $\mathcal{C}_{2}$ sends the public keys of all users $L = \{\textsf{pk}_{1}, \ldots, \textsf{pk}_{q_{E}}\}$ to the adversary $\mathcal{A}_{2}$. 	
        
        \item [$\bullet$] \textsf{Query Phase}: $\mathcal{A}_{2}$ makes the following two types of queries:
        
        -\textsf{Private Key (Corruption) Query}: When $\mathcal{A}_{2}$ queries the private key of user $s$, $\mathcal{C}_{2}$ responds as follows:
        
        \begin{enumerate}
            \item [] Case 1: $s\notin \boldsymbol{w}$. $\mathcal{C}_{2}$ looks up the entry $(s, \textsf{pk}_{s}, \textsf{sk}_{s})$ in list $L_1$. If the corresponding entry exists, it sends the corresponding $\textsf{sk}_{s}$ to $\mathcal{A}_{2}$; if not, $\mathcal{C}_{2}$ invokes the algorithm $\textsf{TrapGen}(q, h, n)$ to generate a trapdoor pair $(\mathbf{A}_{s}, \mathbf{T}_{s})$, sets $\textsf{pk}_{s} = \mathbf{A}_s$ and $\textsf{sk}_{s} = \mathbf{T}_s$, returns $\textsf{sk}_{s}$ to the adversary $\mathcal{A}_{2}$, and stores this entry $(s, \textsf{pk}_{s}, \textsf{sk}_{s})$ in list $L_1$.
            
            \item [] Case 2: $s\in \boldsymbol{w}$. $\mathcal{C}_{2}$ aborts. 
        \end{enumerate}
        
        -\textsf{Hash Query}: When $\mathcal{A}_{2}$ queries the hash value of a uniformly chosen tag $\tau_{si}$ for the message subspace $V_{si}$, where $V_{si}$ is generated by the basis vectors $\{\boldsymbol{m}_{si}^{(j)}\}_{j=1}^{k_0}$ with $1\leq i\leq q_{s}$, $\mathcal{C}_{2}$ checks list $L_{2}$. If the entry $(\tau_{si},\{ \boldsymbol{h}_{si}^{(j)}, \bm{\alpha}_{si}^{(j)} \}_{j=1}^{k})$ exists, $\mathcal{C}_{2}$ returns $\bm{\alpha}_{si}^{(1)},\dots,\bm{\alpha}_{si}^{(k)}$ to the adversary as the response to the tag $\tau_{si}$; otherwise, it performs the following computations:
        
        \begin{enumerate}
            \item[(1)] Compute $ \boldsymbol{h}_{si}^{(\ell)}\leftarrow \textbf{SampleDom}(1^{\ell n},s)$, where $s=V/\sqrt{k/2}$.
            
            \item[(2)] Compute $\bm{\alpha}_{si}^{(j)}=\mathbf{A}_{R_{\boldsymbol{w}}}\boldsymbol{h}_{si}^{(j)}\pmod{q}$ for $\ell\in[k]$. (By Lemma \ref{l3.4}, the distribution of $\bm{\alpha}_{si}^{(j)}$ is indistinguishable from the uniform distribution over $\mathbb{Z}_{q}^{h}$ at this point.)
        \end{enumerate}
        
        At this point, return $\bm{\alpha}_{si}^{(1)},\dots,\bm{\alpha}_{si}^{(k)}$ to the adversary, and store $(\tau_{si},\{ \boldsymbol{h}_{si}^{(j)}, \bm{\alpha}_{si}^{(j)} \}_{j=1}^{k})$ in list $L_{2}$.

        -\textsf{Signature Query}: $\mathcal{A}_{2}$ selects the tuple $(s, R_{\boldsymbol{w}^*} , V_{si}, \tau_{si} )$ and sends it to $\mathcal{C}_{2}$. $\mathcal{C}_{2}$ checks list $L_{3}$ to find the entry $(s, R_{\boldsymbol{w}^{*}}, V_{si}, \tau_{si} , \{\boldsymbol{m}_{si}^{(j)}, \sigma_{si}^{(j)}\}_{j=1}^{k_{0}})$. If it exists, it returns the corresponding entries $\sigma_{si}^{(1)},\dots, \sigma_{si}^{(k_{0})}$ to the adversary; otherwise, it computes the signatures $\{\sigma_{si}^{(j)}\}_{j=1}^{k_{0}}$ according to the following cases and stores the new tuple in list $L_{3}$:
        
        \begin{enumerate}
            \item [] Case 1: $R_{\boldsymbol{w}^*}=R_{\boldsymbol{w}}$.  
            \begin{enumerate}
                \item [(1)] Compute $(\boldsymbol{u}_{si}^{(j)}, \boldsymbol{v}_{si}^{(j)}){\leftarrow}\textbf{Decompose}(\boldsymbol{m}_{si}^{(j)})$, and let 
                $$ \boldsymbol{u}_{si}^{(j)}=(u_{si}^{(j1)}, \dots, u_{si}^{(jk)}),  $$
                $$ \boldsymbol{v}_{si}^{(j)}=(v_{si}^{(j1)}, \dots, v_{si}^{(jk)}). $$
                \item [(2)] Compute 
                $$\boldsymbol{e}(\boldsymbol{u}_{si}^{(j)})=\sum_{p=1}^{k}u_{si}^{(jp)}\boldsymbol{h}_{si}^{(j)} $$ 
                and 
                $$\boldsymbol{e}(\boldsymbol{v}_{si}^{(j)})=\sum_{p=1}^{k}v_{si}^{(jp)}\boldsymbol{h}_{si}^{(j)} . $$ 
                Set $\boldsymbol{e}_{si}^{(j)}=\boldsymbol{e}(\boldsymbol{u}_{si}^{(j)})+\boldsymbol{e}(\boldsymbol{v}_{si}^{(j)})$ and $\sigma_{si}^{(j)}=( \boldsymbol{e}_{si}^{(j)}, L_{R_{\boldsymbol{w}^{*}}})$. 
            \end{enumerate}
            
            \vspace{1\baselineskip}
            \item [] Case 2: $R_{\boldsymbol{w}^*}\neq  R_{\boldsymbol{w}}$ and $ s \notin \boldsymbol{w}$. 
            \begin{enumerate}
                \item [(1)] $\mathcal{C}_{2}$ retrieves the tuple $(s, \textsf{pk}_{s}, \textsf{sk}_{s})$ from list $L_{2}$ to obtain the private key $\mathbf{T}_{s} $ of user $s$. 
                \item [(2)] Compute $(\boldsymbol{u}_{si}^{(j)}, \boldsymbol{v}_{si}^{(j)}){\leftarrow}\textsf{Decompose}(\boldsymbol{m}_{si}^{(j)})$, and let 
                $$ \boldsymbol{u}_{si}^{(j)}=(u_{si}^{(j1)}, \dots, u_{si}^{(jk)}),  $$
                $$ \boldsymbol{v}_{si}^{(j)}=(v_{si}^{(j1)}, \dots, v_{si}^{(jk)}). $$
                \item [(3)] Compute 
                $$\boldsymbol{t}(\boldsymbol{u}_{si}^{(j)})=\sum_{p=1}^{k}u_{si}^{(jp)}\bm{\alpha}_{si}^{(p)}$$
                and 
                $$\boldsymbol{t}(\boldsymbol{v}_{si}^{(j)})=\sum_{p=1}^{k}v_{si}^{(jp)}\bm{\alpha}_{si}^{(p)}. $$
                \item [(4)] Compute 
                $$ \mathbf{e}(\boldsymbol{u}_{si}^{(j)})\leftarrow \textsf{GenSamplePre}(\mathbf{A}_{R_{\boldsymbol{w}^{*}}}, \mathbf{A}_{s}, \mathbf{T}_{s}, \boldsymbol{t}(\boldsymbol{u}_{si}^{(j)}), V ) $$
                and 
                $$ \mathbf{e}(\boldsymbol{v}_{si}^{(j)})\leftarrow \textsf{GenSamplePre}(\mathbf{A}_{R_{\boldsymbol{w}^{*}}}, \mathbf{A}_{s}, \mathbf{T}_{s}, \boldsymbol{t}(\boldsymbol{v}_{si}^{(j)}), V ) . $$
                Set $\boldsymbol{e}_{si}^{(j)}=\boldsymbol{e}(\boldsymbol{u}_{si}^{(j)})+\boldsymbol{e}(\boldsymbol{v}_{si}^{(j)})$ and $\sigma_{si}^{(j)}=( \boldsymbol{e}_{si}^{(j)}, L_{R_{\boldsymbol{w}^{*}}})$. 
            \end{enumerate}
            
            \vspace{1\baselineskip}
            \item [] Case 3: $R_{\boldsymbol{w}^*}\neq  R_{\boldsymbol{w}}$ and $ s \in\boldsymbol{w}$. 
            \begin{enumerate}
                \item [(1)] $\mathcal{C}_{2}$ finds a member $s^{*}$ in the ring $R_{\boldsymbol{w}^{*}}$ such that $s^{*} \notin \boldsymbol{w}$, then retrieves the tuple $(s^{*}, \textsf{pk}_{s^{*}}, \textsf{sk}_{s^{*}})$ from list $L_{2}$ to obtain the private key $\mathbf{T}_{s^{*}}$ of user $s^{*}$. 
                \item [(2)] The remaining steps follow the signature generation process in Case 2 to produce the signature $\sigma_{si}^{(j)}$. (Note that any member of the ring $R_{\boldsymbol{w}^*}$ can generate a valid signature on behalf of the ring.) 
            \end{enumerate}
        \end{enumerate}	
    \end{enumerate}
    
    We now prove that the signatures provided by $\mathcal{C}_{2}$ are both valid and indistinguishable from the signatures in the real scheme.   
    
    Since the signatures in Case 2 and Case 3 are generated by the real signature algorithm, it suffices to prove that the signatures generated in Case 1 are both valid and computationally indistinguishable from the signatures in the real scheme. Furthermore, since $L_{R_{w^*}}$ in the signature component is an index vector, we only need to verify that the signature component $\boldsymbol{e}_{si}^{(j)}$ is computationally indistinguishable from the corresponding signature component in the real signature scheme. 
    
    By Lemma \ref{t3.33} and Lemma \ref{co}, the distributions of $\boldsymbol{e}(\boldsymbol{u}_{si}^{(j)})$ and $\boldsymbol{e}(\boldsymbol{v}_{si}^{(j)})$ are both statistically close to the distribution $\mathcal{D}_{\mathbb{Z}^{\ell^{*} n}, V}$, where $\ell^{*}$ denotes the number of ring members in $R_{\boldsymbol{w}^{*}}$. 
    
    Additionally, since $\mathbf{A}_{R_{\boldsymbol{w}^{*}}}\cdot \boldsymbol{e}(\boldsymbol{u}_{si}^{(j)}) = \boldsymbol{t}(\boldsymbol{u}_{si}^{(j)}) \pmod{q}$, where $\boldsymbol{t}(\boldsymbol{u}_{si}^{(j)}) = \sum_{p=1}^{k}u_{si}^{(jp)} \bm{\alpha}_{si}^{(p)}$, by Lemma \ref{l3.777}, $\boldsymbol{e}(\boldsymbol{u}_{si}^{(j)})$ is statistically close to the distribution $\mathcal{D}_{\mathcal{L}_{1}, V}(\mathbf{A}_{R_{\mathbf{w}^{*}}})$, where $\mathcal{L}_{1}=\Lambda_{q}^{\boldsymbol{t}(\boldsymbol{u}_{si}^{(j)})}$. Similarly, $\boldsymbol{e}(\boldsymbol{v}_{si}^{(j)})$ is statistically close to the distribution $\mathcal{D}_{\mathcal{L}_{2}, V}(\mathbf{A}_{R_{\boldsymbol{w}^{*}}})$, where $\mathcal{L}_{2}=\Lambda_{q}^{\boldsymbol{t}(\boldsymbol{v}_{si}^{(j)})}$. 
    
    By Lemma \ref{lem5}, we have
    \[
    \left\Vert \boldsymbol{e}(\boldsymbol{u}_{si}^{(j)}) \right\Vert = \left\Vert \sum_{p=1}^{k}u_{si}^{(jp)} \boldsymbol{h}_{si}^{(p)} \right\Vert \leq k \max_{1 \leq p \leq k} \left\Vert\boldsymbol{h}_{si}^{(p)} \right\Vert \leq k \frac{V}{\sqrt{k/2}} \sqrt{\ell^{*} n} = V \sqrt{2k \ell^{*} n}. 
    \]
    
    Similarly, 
    \[
    \left\Vert \boldsymbol{e}(\boldsymbol{v}_{si}^{(j)}) \right\Vert \leq V \sqrt{2k \ell^* n}. 
    \]
    
    Therefore, 
    \[
    \left\Vert \boldsymbol{e}_{si}^{(j)} \right\Vert = \left\Vert \boldsymbol{e}(\boldsymbol{u}_{si}^{(j)}) + \boldsymbol{e}(\boldsymbol{v}_{si}^{(j)}) \right\Vert \leq 2V \sqrt{2k \ell^* n} \leq k V \sqrt{2k \ell^* n}. 
    \]
    
    Meanwhile, the adversary $\mathcal{A}_{2}$ can verify that the following equality holds:
    \begin{align*}
        \mathbf{A}_{R_{\boldsymbol{w}^*}} \cdot \boldsymbol{e}_{si}^{(j)} \pmod{q}
        &= \mathbf{A}_{R_{\boldsymbol{w}^*}} \cdot \left(\boldsymbol{e}(\boldsymbol{u}_{si}^{(j)}) + \mathbf{e}(\boldsymbol{v}_{si}^{(j)})\right) \pmod{q}\\
        &= \mathbf{A}_{R_{\boldsymbol{w}^*}} \left(\sum_{p=1}^{k}  u_{si}^{(jp)} \boldsymbol{h}_{si}^{(p)} + \sum_{p=1}^{k} v_{si}^{(jp)} \boldsymbol{h}_{si}^{(p)}\right) \pmod{q}\\
        &= \sum_{p=1}^{k} u_{si}^{(jp)} \mathbf{A}_{R_{\boldsymbol{w}^*}} \cdot \boldsymbol{h}_{si}^{(p)} + \sum_{p=1}^{k}v_{si}^{(jp)} \mathbf{A}_{R_{\boldsymbol{w}^*}} \cdot \boldsymbol{h}_{si}^{(p)} \pmod{q}\\
        &= \sum_{p=1}^{k} u_{si}^{(jp)} \bm{\alpha}_{si}^{(p)}+ \sum_{p=1}^{k}  v_{si}^{(jp)} \bm{\alpha}_{si}^{(p)} \\
        &= \sum_{p=1}^{k} \left(u_{si}^{(jp)} + v_{si}^{(jp)}\right) \bm{\alpha}_{si}^{(p)} \\
        &= \sum_{p=1}^{k}m_{si}^{(jp)} \bm{\alpha}_{si}^{(p)} \\
        &= \boldsymbol{t}_{si}^{(j)} 
    \end{align*}
    
    Thus, the signatures simulated by $\mathcal{C}_{2}$ are both valid and computationally indistinguishable from the signatures in the real scheme. 
    
    After completing the queries, $\mathcal{A}_{2}$ outputs a valid forgery $(R^{*}, \tau^{*}, \boldsymbol{m}^{*}, \sigma^{*})$, where $\sigma^{*} = (\boldsymbol{e}^{*}, L_{R^{*}})$ and $L_{R^{*}}$ denotes the index of the ring $R^{*}$. 
    
    If $R^{*} \neq R_{\boldsymbol{w}}$, then $\mathcal{C}_{2}$ aborts the simulation. Otherwise, it handles the following two cases:
    
    \begin{enumerate}
        \item [(1)] Assume the adversary performs a Type I forgery: i.e., $\tau^*\neq \tau_{si}$ where $s\in [q_{E}]$ and $i\in[q_{s}]$.
        
        The challenger $\mathcal{C}_{2}$ then checks list $L_{2}$ to find $\tau^{*}$ and its associated record $(\tau^*,\{ \boldsymbol{h}_{j}^{*}, \bm{\alpha}_{j}^{*} \}_{j=1}^{k})$ (if it does not exist, simply perform the hash query again). Let $\boldsymbol{h}^{*}=\sum_{j=1}^{k}m_{j}^{*}\boldsymbol{h}_{j}^{*}$; it can be verified that $(\boldsymbol{h}^{*},L_{R^{*}} )$ is also a signature for $\boldsymbol{m}^*$, thus
        $$  \mathbf{A}_{R_{\boldsymbol{w}}} \cdot \boldsymbol{e}^{*}=   \mathbf{A}_{R_{\boldsymbol{w}}} \cdot \boldsymbol{h}^{*}\pmod{q}, $$
        and therefore $ \mathbf{A}_{R_{\boldsymbol{w}}} (\boldsymbol{e}^{*}-\boldsymbol{h}^{*})=0\pmod{q}.$ Furthermore, since:
        \[
        \left\Vert  \boldsymbol{e}^* -\boldsymbol{h}^*\right\Vert \leq 2kV\sqrt{2k\ell n}.
        \]
        When $\boldsymbol{e}^{*}-\boldsymbol{h}^{*}\neq 0 $, $\mathcal{C}_{2}$ outputs $\boldsymbol{e}^{*}-\boldsymbol{h}^{*}$ as a solution to $\mathsf{SIS}_{q, h, \ell n, \beta}$. Additionally, since $\boldsymbol{h}^{*}$ is the result of Gaussian sampling with a conditional min-entropy of at least $\omega(\log n)$, we have
        $$    \Pr\left[\boldsymbol{e}^* = \boldsymbol{h}^*\right] \leq 2^{-\omega(\log n)} \leq \mathsf{negl}(n).$$
        
        \item [(2)] Assume the adversary performs a Type II forgery: i.e., there exist $s,i$ such that $\tau^*=\tau_{si}$ where $s\in [q_{E}]$ and $i\in[q_{s}]$.		
        
        The challenger $\mathcal{C}_{2}$ checks list $L_{2}$ to find $\tau_{si}$ and its associated record $(\tau_{si},\{ \boldsymbol{h}_{si}^{(j)}, \bm{\alpha}_{si}^{(j)} \}_{j=1}^{k})$. Let $\boldsymbol{h}^{(*)}=\sum_{j=1}^{k}m_{j}^{*}\boldsymbol{h}_{si}^{(j)}$; it can be verified that $(\boldsymbol{h}^{(*)},L_{R^{*}} )$ is also a signature for $\boldsymbol{m}^*$. Similar to the discussion of Type I forgery, $\boldsymbol{e}^{*}-\boldsymbol{h}^{*}$ can still be output as a solution to $\mathsf{SIS}_{q, h, \ell n, \beta}$.		
    \end{enumerate}
    
    The probability that $\mathcal{C}_{2}$ does not abort in the game is at least $\frac{1}{2q_{E} \binom{q_{E}}{q_{E}/2}}. $ 
    
    Therefore, combining both cases, $\mathcal{C}_{2}$ will obtain a non-zero short solution to the $\mathsf{SIS}_{q, h, \ell n, \beta}$ problem with the following advantage:
    $$
    \mathbf{Adv}_{q, h, \ell n, 2V}^{\mathrm{SIS}}(\mathcal{C}_{2}) \geq \frac{\mathbf{Adv}_{\mathrm{LHRS}}^{\mathrm{UNF}}(\mathcal{A}_{2})}{2q_{E} \binom{q_{E}}{q_{E}/2}}-\mathrm{negl}(n). 
    $$
\end{proof}

\textbf{Analysis of the Security-Loss Factor.} The assumption $q_{E} \leq \mathcal{O}(\log n)$ implies that the binomial coefficient $\binom{q_{E}}{q_{E}/2}$ is at most polynomial in $n$. Specifically, using the approximation $\binom{2k}{k} \sim 4^{k} / \sqrt{\pi k}$, the entire denominator $2q_{E} \cdot \binom{q_{E}}{q_{E}/2}$ is of order $\mathcal{O}(\mathsf{poly}(n))$. Therefore, the reduction only incurs a polynomial loss in the advantage. A non-negligible success probability for the forger $\mathcal{A}_{2}$ thus translates to a non-negligible advantage for the SIS solver $\mathcal{B}_{2}$.

\vspace{1\baselineskip}
\textbf{Worst-case to Average-case Reduction.} Building upon Theorem \ref{t3.1} in \cite{37}, the computational complexity of the $\mathrm{SIS}_{q,n,m,\beta}$ can be shown to be polynomially equivalent to worst-case approximations of the Shortest Independent Vectors Problem (SIVP), given that the modulus satisfies $q \geq \beta \cdot \omega(\sqrt{n \log n})$. Under this condition, the approximation factor for SIVP is on the order of $\beta \cdot \tilde{O}(\sqrt{n})$. In our construction, we require that \color{blue}\( q \geq (nk)^3 \).\color{black} Therefore, there exists a sufficiently large \( q \) that meets the conditions of Theorem \ref{t3.1}, thereby ensuring its applicability.

\subsection{Comparison with the CK17 Scheme}

We compare our proposed linearly homomorphic ring signature scheme  with the CK17 scheme proposed by Choi and Kim \cite{41, 42} (see also the SCIS18 version).  The comparison focuses on four key aspects: (1) signature size and asymptotic efficiency; (2) practical constraints (ring size limit) and core security properties (anonymity, unforgeability); (3) cryptographic assumptions (hard problem, quantum resistance); and (4) the completeness of the formal security proof.

\begin{table}[htbp]
\centering
\caption{Comparison of our scheme with the CK17 scheme.}
\label{tab:comparison}
\begin{tabular}{|l|c|c|}
\hline
\textbf{Feature} & \textbf{Our Scheme} & \textbf{CK17 Scheme} \\
\hline
Signature size & \(\mathcal{O}(\ell n)\) & \(\mathcal{O}((\ell+1)n)\) \\
Ring size limit & \(\mathcal{O}(\log n)\) & No explicit bound \\
Anonymity &   Yes  &   Yes \\
Unforgeability & Adaptive & Non-adaptive \\
Hard problem & \(\mathrm{SIS}\)  & \(\mathrm{SIS}\) \\
Quantum-resistant & Yes & Yes \\
Full security proof & Yes & No \\
\hline
\end{tabular}
\end{table}

As summarized in Table~\ref{tab:comparison}, our scheme outperforms the CK17 scheme in several aspects, but it also has certain limitations. First, in terms of signature length, our scheme reduces the asymptotic complexity from \(\mathcal{O}((\ell+1)n)\) to \(\mathcal{O}(\ell n)\), where \(\ell\) represents the number of ring members. 

Second, a key limitation of our scheme is the explicit restriction of the ring size to \(\mathcal{O}(\log n)\). This constraint arises from the security reduction technique: to achieve a polynomial loss factor in the unforgeability proof, the binomial coefficient \(\binom{q_E}{q_E/2}\) must be bounded by a polynomial in \(n\), which requires \(q_E \leq \mathcal{O}(\log n)\). In contrast, the CK17 scheme imposes no such restriction, offering greater flexibility in practice. However, this flexibility comes at the cost of security: due to the lack of a formal reduction, the security of the CK17 scheme against adaptive attacks for rings of arbitrary size is not clearly established.

Third, both schemes base their anonymity on the statistical indistinguishability of Gaussian samples. In our scheme, this is formally established in Theorem~\ref{anonymity}, while the CK17 scheme relies on a similar heuristic argument. Regarding unforgeability, our proof considers adversaries capable of corrupting internal users (full key exposure) and adaptively issuing signing queries, achieving a tight reduction to the SIS problem. The CK17 scheme, however, only provides a non-adaptive security sketch without a complete reduction, leaving its actual security level unverified.

Fourth and most importantly, the key distinction lies in the completeness of the security proof. Although CK17 pioneered the concept of linear homomorphic ring signatures and formally defined the adversary model, it only provides a proof sketch (see Theorems 3, 4, and 5 in the SCIS18 version) and falls short of a full reduction to a hard lattice problem. In contrast, this work presents the first complete security proof for a linear homomorphic ring signature scheme, including: (i) a tight reduction from the SIS problem to unforgeability under insider corruption (Theorem~\ref{unforgeability}), and (ii) a rigorous statistical argument for anonymity (Theorem~\ref{anonymity}).

In summary, our scheme prioritizes provable security and tighter reductions at the expense of ring size flexibility, whereas the CK17 scheme offers enhanced practical flexibility in ring size but lacks a complete security argument. This trade-off highlights the inherent challenges in designing homomorphic ring signature schemes and points to future research directions: achieving provable security while supporting larger or even unbounded ring sizes.

\section{Conclusions}

Existing research offers limited exploration on the integration of ring signatures and homomorphic signatures. The concept of linearly homomorphic ring signatures has only been mentioned without providing a complete and provably secure construction. Moreover, its practical design faces the challenge of balancing security, efficiency, and functional completeness. To address this issue, this paper proposes the first provably secure lattice-based linearly homomorphic ring signature scheme, which retains the core security properties of both primitives and simultaneously achieves quantum-resistant security.

Several directions remain for further investigation and improvement of the proposed scheme:
\begin{enumerate}
    \item[(1)] Construct new schemes with short signatures (whose size is independent of the ring size) while maintaining reasonable computational efficiency.
    
 \item[(2)]To design provably secure homomorphic ring signatures that support larger or even unbounded ring sizes;
    
    \item[(3)] Study the feasibility of designing homomorphic ring signature schemes that support polynomial or fully homomorphic computation.
    \item[(4)] Explore whether homomorphic ring signature schemes with tight or almost tight security reductions can be constructed under standard lattice assumptions.
\end{enumerate}

\section*{Author contributions}
These authors contributed equally to this work.

	 \section*{Availability of data and materials}
No data were used in the present study. Thus, there is no relevant data to share or report regarding availability.

\section*{Declarations}
The authors declare that they have no known competing financial interests or personal relationships that could have influenced the work reported in this paper. All authors confirm that the research complies with ethical standards and is free from conflicts of interest.


\begin{thebibliography}{99}
	\bibitem{1}
Diffie W, Hellman M.E.
\newblock New directions  in cryptography[J].
\newblock IEEE Transactions on Information Theory, 1976,  22(6): 644-654.

\bibitem{2}
Liu F, Zheng Z, Gong Z, et al. A survey on lattice-based digital signature[J]. Cybersecurity, 2024, 7(1): 7.


\bibitem{3}
Lyubashevsky V.
\newblock Lattice signatures without trapdoors[C].
\newblock In Annual International Conference on the Theory and Applications of Cryptographic Techniques. Berlin, Heidelberg: Springer Berlin Heidelberg, 2012: 738-755.	

\bibitem{4}
Katz, J. 
\newblock  Digital signatures[M].
\newblock  Mathematische Annalen,  Berlin: Springer. 2010. \url{https://doi.org/10.1007/978-0-387-27712-7} 

\bibitem{5}
Boneh D, Boyen X, Shacham H.
\newblock Short group signatures[C]. 
\newblock  In Annual international cryptology conference. Berlin, Heidelberg: Springer Berlin Heidelberg, 2004: 41-55.

\bibitem{6}
Pointcheval D, Stern J. 
\newblock Security arguments for digital signatures and blind signatures[J].
\newblock Journal of cryptology, 2000, 13: 361-396.


\bibitem{7}
Rivest RL, Shamir A, Tauman Y. 
\newblock How to leak a secret[C].
\newblock Advances in Cryptology—ASIACRYPT 2001: 7th International Conference on the Theory and Application of Cryptology and Information Security Gold Coast, Australia, December 9–13, 2001 Proceedings 7. Springer Berlin Heidelberg, 2001: 552-565.



\bibitem{8}
Rivest R L, Shamir A, Tauman Y.
\newblock  How to leak a secret: Theory and applications of ring signatures[J].
\newblock Theoretical Computer Science: Essays in Memory of Shimon Even, 2006: 164-186.




\bibitem{9}
Tsang P P, Wei V K.
\newblock  Short linkable ring signatures for e-voting, e-cash and attestation[C].
\newblock In International Conference on Information Security Practice and Experience. Berlin, Heidelberg: Springer Berlin Heidelberg, 2005: 48-60.


\bibitem{10}
Ta A T, Khuc T X, Nguyen T N, et al.
\newblock  Efficient unique ring signature for blockchain privacy protection[C].
\newblock In Information Security and Privacy: 26th Australasian Conference, ACISP 2021, Virtual Event, December 1–3, 2021, Proceedings 26. Springer International Publishing, 2021: 391-407.




\bibitem{11}
Thyagarajan S A K, Malavolta G, Schmid F, et al.
\newblock Verifiable timed linkable ring signatures for scalable payments for monero[C].
\newblock In European Symposium on Research in Computer Security. Cham: Springer Nature Switzerland, 2022: 467-486.




\bibitem{12}
Fujisaki E, Suzuki K. 
\newblock Traceable ring signature[C].
\newblock  In International Workshop on Public Key Cryptography. Berlin, Heidelberg: Springer Berlin Heidelberg, 2007: 181-200.


\bibitem{13}
Bender A, Katz J, Morselli R.
\newblock  Ring signatures: Stronger definitions, and constructions without random oracles[C].
\newblock In Theory of Cryptography Conference. Berlin, Heidelberg: Springer Berlin Heidelberg, 2006: 60-79.



\bibitem{14}
Liu J K, Au M H, Susilo W, et al.
\newblock  Linkable ring signature with unconditional anonymity[J].
\newblock IEEE Transactions on Knowledge and Data Engineering, 2013, 26(1): 157-165.



\bibitem{15}
Wang J, Sun B.
\newblock  Ring signature schemes from lattice basis delegation[C].
\newblock In Information and Communications Security: 13th International Conference, ICICS 2011, Beijing, China, November 23-26, 2011. Proceedings 13. Springer Berlin Heidelberg, 2011: 15-28.		


\bibitem{16}
Shacham H, Waters B.
\newblock Efficient ring signatures without random oracles[C].
\newblock Public Key Cryptography–PKC 2007: 10th International Conference on Practice and Theory in Public-Key Cryptography Beijing, China, April 16-20, 2007. Proceedings 10. Springer Berlin Heidelberg, 2007: 166-180.



	\bibitem{17}
Johnson R, Molnar D, Song D, et al.
\newblock Homomorphic signature schemes[C]. 
\newblock In Cryptographers’ track at the RSA conference. Berlin, Heidelberg: Springer Berlin Heidelberg, 2002: 244-262.


	\bibitem{18}
Yun A, Cheon J H, Kim Y.
\newblock On homomorphic signatures for network coding[J]. 
\newblock IEEE Transactions on Computers, 2010, 59(9): 1295-1296.		 


\bibitem{19}
Attrapadung N, Libert B.
\newblock Homomorphic network coding signatures in the standard model[C] 
\newblock In Public Key Cryptography–PKC 2011: 14th International Conference on Practice and Theory in Public Key Cryptography, Taormina, Italy, March 6-9, 2011. Proceedings 14. Springer Berlin Heidelberg, 2011: 17-34.	 


\bibitem{20}
Wu B, Wang C, Yao H.
\newblock  A certificateless linearly homomorphic signature scheme for network coding and its application in the IoT[J].
\newblock  Peer-to-Peer Networking and Applications, 2021, 14(2): 852-872.


\bibitem{21}
Li J, Zhang Y, Chen X, et al.
\newblock  Secure attribute-based data sharing for resource-limited users in cloud computing[J].
\newblock  computers \& security, 2018, 72: 1-12.

\bibitem{22}
Li P, Li J, Huang Z, et al.
\newblock  Privacy-preserving outsourced classification in cloud computing[J]. 
\newblock  Cluster Computing, 2018, 21: 277-286.

\bibitem{23}
Emmanuel N, Khan A, Alam M, et al. 
\newblock  Structures and data preserving homomorphic signatures[J].
\newblock  Journal of Network and Computer Applications, 2018, 102: 58-70.



 56:	\bibitem{24}
Guo H, Tian K, Liu F, et al.
\newblock  Linearly Homomorphic Signature with Tight Security on Lattice[J].
\newblock   arXiv preprint arXiv:2412.01641, 2024.

\bibitem{25}
Boneh D, Freeman D M. 
\newblock Linearly homomorphic signatures over binary fields and new tools for lattice-based signatures[C].
\newblock In International Workshop on Public Key Cryptography. Berlin, Heidelberg: Springer Berlin Heidelberg, 2011: 1-16.

\bibitem{26}
Chen W, Lei H, Qi K.
\newblock Lattice-based linearly homomorphic signatures in the standard model[J]. 
\newblock Theoretical Computer Science, 2016, 634: 47-54.

\bibitem{27}
Schabhüser L, Buchmann J, Struck P. 
\newblock  A linearly homomorphic signature scheme from weaker assumptions[C].
\newblock  In Cryptography and Coding: 16th IMA International Conference, IMACC 2017, Oxford, UK, December 12-14, 2017, Proceedings 16. Springer International Publishing, 2017: 261-279.


\bibitem{28}
Wang F H, Hu Y P, Wang B C.
\newblock Lattice-based linearly homomorphic signature scheme over binary field[J].
\newblock Science China Information Sciences, 2013, 1-9.


	\bibitem{29}
Boneh D, Freeman D M.
\newblock Homomorphic signatures for polynomial functions[C].
\newblock In Advances in Cryptology–EUROCRYPT 2011: 30th Annual International Conference on the Theory and Applications of Cryptographic Techniques, Tallinn, Estonia, May 15-19, 2011. Proceedings 30. Springer Berlin Heidelberg, 2011: 149-168.

\bibitem{30}
Hiromasa R, Manabe Y, Okamoto T.
\newblock Homomorphic signatures for polynomial functions with shorter signatures[C].
\newblock In The 30th symposium on cryptography and information security, Kyoto. 2013.


\bibitem{31}
Catalano D, Fiore D, Warinschi B. 
\newblock Homomorphic signatures with efficient verification for polynomial functions[C].
\newblock In Annual Cryptology Conference. Berlin, Heidelberg: Springer Berlin Heidelberg, 2014: 371-389.


\bibitem{32}
Arita S, Kozaki S.
\newblock  A homomorphic signature scheme for quadratic polynomials[C].
\newblock In 2017 IEEE International Conference on Smart Computing (SMARTCOMP). IEEE, 2017: 1-6.



	\bibitem{33}
Gorbunov S, Vaikuntanathan V, Wichs D.
\newblock  Leveled fully homomorphic signatures from standard lattices[C].
\newblock In Proceedings of the forty-seventh annual ACM symposium on Theory of computing. 2015: 469-477.

\bibitem{34}
Wang Y, Wang M.
\newblock A new fully homomorphic signatures from standard lattices[C].
\newblock In  International Conference on Wireless Algorithms, Systems, and Applications. Cham: Springer International Publishing, 2020: 494-506.

\bibitem{35}
Boyen X, Fan X, Shi E.
\newblock  Adaptively secure fully homomorphic signatures based on lattices[J].
\newblock Cryptology ePrint Archive, 2014.



  
   
	\bibitem{36}
Zheng Z.
\newblock Modern Cryptography Volume 1: A Classical Introduction to Informational and Mathematical Principle[M].
\newblock Springer Nature, 2022.


\bibitem{37}
Gentry C, Peikert C, Vaikuntanathan V.
\newblock Trapdoors for hard lattices and new cryptographic constructions[C].
\newblock Proceedings of the fortieth annual ACM symposium on Theory of computing. 2008: 197-206.


\bibitem{38}
Agrawal S, Boneh D, Boyen X.
\newblock Lattice basis delegation in fixed dimension and shorter-ciphertext hierarchical IBE[C].
\newblock Advances in Cryptology–CRYPTO 2010: 30th Annual Cryptology Conference, Santa Barbara, CA, USA, August 15-19, 2010. Proceedings 30. Springer Berlin Heidelberg, 2010: 98-115.


	\bibitem{39}
Micciancio D, Regev O.
\newblock Worst-case to average-case reductions based on Gaussian measures[J].
\newblock SIAM journal on computing, 2007, 37(1): 267-302.

\bibitem{40}
Cash D, Hofheinz D, Kiltz E.
\newblock How to delegate a lattice basis[J].
\newblock Cryptology ePrint Archive, 2009.

\bibitem{41}
Choi R, Kim K. Design of new linearly homomorphic signatures on lattice[C].Symposium on Cryptography and Information Security. 2017.

\bibitem{42}
Choi R, Kim K. Revisiting CK17 Linearly Homomorphic Ring Signature based on SIS[C].2018 Symposium on Cryptography and Information Security (SCIS 2018). IEICE Technical Committee on Information Security, 2018.

\end{thebibliography}
\end{document}